\newif\ifconf
\conftrue

\newif\ifcomm
\commtrue

\newif\ifblind
\blindfalse

\newif\ifs
\sfalse

\newif\ifhyperref
\hyperreftrue

\newif\ifacm
\acmfalse

\newif\ifusenix
\ifacm
   \usenixfalse %
\else
    \usenixfalse %
\fi

\newif\ifhotnets
\ifacm
   \hotnetsfalse %
\else
    \ifusenix
        \hotnetsfalse %
    \else
        \hotnetsfalse %
    \fi
\fi
    
\newif\ifacmart
\ifacm   %
   \acmarttrue
\else
   \acmartfalse
\fi

\newif\ifieee
\ifacm
   \ieeefalse
\else
    \ifusenix
        \ieeefalse
    \else
        \ifhotnets
            \ieeefalse
        \else
            \ieeetrue
        \fi
    \fi
\fi

\ifconf
    \newcommand{\Conf}[1]{#1}
    \newcommand{\TR}[1]{}
    \newcommand{\Journal}[1]{}  %
    \newcommand{\OnlyTR}[1]{}   %
\else
    \newcommand{\Conf}[1]{}
    \newcommand{\TR}[1]{#1}
    \newcommand{\Journal}[1]{#1}  %
    \newcommand{\OnlyTR}[1]{}   %
\fi

\ifacm
	\ifacmart
	    \ifblind
            \documentclass[sigconf,anonymous,nonacm]{acmart}
        \else
            \documentclass[sigconf]{acmart}
        \fi
    \else
      \Conf{
 	     \documentclass[letterpaper,conference]{sig-alternate-10pt}
      }
      \TR{
     	 \documentclass[letterpaper,conference]{sig-alternate}
      }
      \usepackage[
       pass,%
      ]{geometry}
	\fi
\else
	\ifusenix  %
		\documentclass[letterpaper,twocolumn,10pt]{article}
        \usepackage{usenix-2020-09}
        \usepackage{filecontents}
    \else %
	    \ifhotnets  %
            \documentclass{hotnets23}
    	\else  %
            \Conf{
                  \documentclass[conference]{IEEEtran}
            }
            \TR{
                  \documentclass[journal]{IEEEtran}
            }
        \fi
    \fi
\fi

\ifieee %
    \usepackage[noadjust]{cite} %
	\usepackage[cmex10]{amsmath}
	\usepackage{amsthm}  %
    \newtheoremstyle{boldthm}{}{}{\itshape}{}{\bfseries}{.}{ }{\thmname{#1}\thmnumber{ #2}\thmnote{ (#3)}} %
	\theoremstyle{boldthm}
\fi

\ifacmart 
\else 
    \ifhotnets
        \usepackage{times}  
        \ifhyperref
            \usepackage[hidelinks]{hyperref}
            \hypersetup{pdfstartview=FitH,pdfpagelayout=SinglePage}
        \fi

    \else
        \ifhyperref
            \usepackage[hidelinks]{hyperref} %
            \hypersetup{pdfstartview=FitH,pdfpagelayout=SinglePage}
        \fi
    \fi
\fi

\ifacmart
\else
  \newtheorem{theorem}{Theorem}%
  \newtheorem{property}{Property} %

\fi

\ifieee %
    \newcommand{\bp}{\begin{IEEEproof}}     %
    \newcommand{\bpo}{ \begin{IEEEproof}[Proof Outline] }
    \newcommand{\ep}{\end{IEEEproof}}       %
    \newcommand{\proofof}[1]{\begin{IEEEproof}[Proof of #1]} %
\else
    \newcommand{\bp}{\begin{proof}}
    \newcommand{\bpo}{ \begin{proof}[Proof Outline] }
    \newcommand{\ep}{\end{proof}}       %
    \newcommand{\proofof}[1]{\begin{proof}[Proof of #1]} %
\fi

\ifieee
    \usepackage[justification=justified, font=footnotesize]{caption}
    \usepackage[labelformat=simple, font=footnotesize]{subcaption}  
    \DeclareCaptionLabelSeparator{periodspace}{.\quad}
    \ifconf
    \else
    \fi
\else %
    \ifs %
    	\usepackage[small]{caption} 
    \else
    	\usepackage{caption}
    \fi
    \usepackage[labelformat=simple]{subcaption}
\fi

\usepackage{comment}
\usepackage{graphicx}
\usepackage{amssymb}
\usepackage{url}
\usepackage[normalem]{ulem} %
\usepackage{xcolor} %

\usepackage{xspace}
\usepackage{balance}
\usepackage{booktabs}  %
\usepackage{tabularx}  %

\usepackage{array}
	\newcolumntype{C}[1]{>{\centering\let\newline\\\arraybackslash\hspace{0pt}}m{#1}}
\usepackage[inline,shortlabels]{enumitem} 
\usepackage{algorithm}
\usepackage[noend]{algorithmic}

\usepackage[capitalize,noabbrev]{cleveref}
\crefname{equation}{Eq.}{Eqs.}
\Crefname{equation}{Eq.}{Eqs.}
\crefname{figure}{Fig.}{Figs.}
\Crefname{figure}{Fig.}{Figs.}
\crefname{table}{Tab.}{Tabs.}
\Crefname{table}{Tab.}{Tabs.}
\crefname{property}{Property}{Properties}
\Crefname{property}{Property}{Properties}
\crefname{section}{\S}{Secs.}
\Crefname{section}{\S}{Secs.}
\crefname{subsection}{\S}{Secs.}
\Crefname{subsection}{\S}{Secs.}
\crefformat{subsection}{\S#2#1#3}

\ifcomm
	\newcommand{\mycomm}[3]{{\footnotesize{{\color{#2} \textbf{[#1: #3]}}}}}
     \newcommand{\Fmycomm}[3]{\footnote{{{\color{#2} \textbf{[#1: #3]}}} }}
\else
    \newcommand{\mycomm}[3]{}
    \newcommand{\Fmycomm}[3]{}
\fi

\newcommand{\new}[1]{\textcolor{black}{#1}}
\newcommand{\camera}[1]{#1}

\newenvironment{red}{\color{black}}{}

\ifs
	\usepackage[compact]{titlesec} %
		
	\newcommand{\ReduceVSpace}{\vspace{-13pt}}

    \usepackage[inline]{enumitem}
        \setlist{leftmargin=*} %
    \newcommand{\T}[1]{\par\vspace{2pt plus 1pt minus 1pt}\noindent\textbf{#1}} %

\else

	\newcommand{\ReduceVSpace}{\vspace{-3pt}}

    \usepackage[inline]{enumitem}
    \newcommand{\T}[1]{\par\smallskip\noindent\textbf{#1}} %
\fi

\newcommand{\be}{\begin{equation}}
\newcommand{\ee}{\end{equation}}

\newcommand{\vx}{\checkmark\kern-1.1ex\raisebox{.7ex}{\rotatebox[origin=c]{125}{--}}} %

\newcommand{\OPT}{\textsc{opt}}

\newcommand{\name}{\textsc{ToggleCCI}\xspace}
\newcommand{\AlwaysCCI}{\textsc{Always-CCI}\xspace}
\newcommand{\AlwaysVPN}{\textsc{Always-VPN}\xspace}
\newcommand{\AvgAll}{\textsc{Avg(All)}\xspace}
\newcommand{\AvgMonth}{\textsc{Avg(Month)}\xspace}

\newcommand{\demand}{d^{p,t}}
\newcommand{\costVPN}{c_{\textsc{VPN}}^{p,t}}
\newcommand{\costCCI}{c_{\textsc{CCI}}^{p}}
\newcommand{\leaseCCI}{L_{\textsc{CCI}}}
\newcommand{\leaseVlan}{V_{\textsc{CCI}}^{p}}
\newcommand{\leaseVPN}{L_{\textsc{VPN}}^{p}}
\newcommand{\Delay}{D\xspace}

\newcommand{\contract}{T_{\textsc{CCI}}}

\newcommand{\History}{h}

\newcommand{\aggVPN}{R_{\textsc{VPN}}}

\newcommand{\aggCCI}{R_{\textsc{CCI}}}

\newcommand{\thresholdEnter}{\theta_{1}}

\newcommand{\thresholdStay}{\theta_{2}}
\newcommand{\tstate}{T_{\mathrm{state}}}

\newcommand\blfootnote[1]{%
  \begingroup
  \renewcommand\thefootnote{\fnsymbol{footnote}}%
  \renewcommand\thempfootnote{\fnsymbol{mpfootnote}}%
  \footnote[0]{#1}%
  \endgroup
}

\newcommand{\newVar}[2]{\newcommand{#1}{\ensuremath{#2}\xspace}}
  \newVar{\server}{S}
  \newVar{\client}{C}
  \newVar{\rclient}{R_c}
  \newVar{\rserver}{R_s}
  
\providecommand{\vs}{{vs.}\xspace}
\providecommand{\ie}{{i.e.,}\xspace}
\providecommand{\eg}{{e.g.,}\xspace}

\usepackage{seqsplit}
\renewcommand{\seqinsert}{\ifmmode\allowbreak\else\textbackslash\fi}
\newcommand{\cmd}[1]{
\renewcommand{\\}{\textbackslash\newline\mbox{}\hfill}
\par\noindent\rule{0pt}{1.2\baselineskip}%
{\color{gray!50!black}\texttt{#1}}%
\vspace{.2\baselineskip}%
\newline%
}

\begin{document}

\title{Understanding Cross-Cloud Interconnects: \\ Hands-On Measurements and Cost Optimization}

\ifblind 
	\ifacmart
    \else
    	\author{}%
	\fi
\else %
  \ifacm %
	\ifacmart
        \ifhyperref
            \newcommand{\aut}[2]{#1\texorpdfstring{$^{#2}$}{(#2)}}  %
        \else
            \newcommand{\aut}[2]{#1$^{#2}$}
        \fi
        \author{
          \aut{Eitan Eliav}{1},
          \aut{Isaac Keslassy}{1},
          \aut{David Breitgand}{2},
          \aut{Dean H. Lorenz}{2},
          \aut{Avi Weit}{2}
        }%
              \affiliation{
        $^1$ \textit{Technion} \quad 
        $^2$ \textit{IBM Research -- Haifa} }
          \renewcommand{\shortauthors}{F. Author \textit{et al.}}     %
 \acmConference[CoNEXT'23]{ACM CoNEXT}{December 2023}{Seoul, South Korea}
\settopmatter{printfolios=true,printacmref=false} %
   	  \setcopyright{none}
	\else %
    	\author{List of authors}
     \fi

  \else %
  \ifusenix %
      \author{
      {\rm Your N.\ Here}\\
      Your Institution
      \and
      {\rm Second Name}\\
      Second Institution
      } %
      \else %
      \ifhotnets %
            \author{TBD}
      
            \else %
               \Conf{
                    \ifhyperref
                        \newcommand{\aut}[2]{#1\texorpdfstring{$^{#2}$}{(#2)}}  %
                    \else
                        \newcommand{\aut}[2]{#1$^{#2}$}
                    \fi
                  \author{
                     \IEEEauthorblockN{\large   
                        \aut{Eitan Eliav}{1},
                        \aut{Isaac Keslassy}{1,2},
                        \aut{David Breitgand}{3},
                        \aut{Dean H. Lorenz}{3},
                        \aut{Avi Weit}{3}
                          }
                    \IEEEauthorblockA{
                              $^1$ \textit{Technion} \quad 
                              $^2$ \textit{UC Berkeley} \quad 
                              $^3$ \textit{IBM Research -- Haifa}}\\
                    }
                }
                \TR{\author{Grad Student and Isaac~Keslassy,~\IEEEmembership{Senior Member,~IEEE} \thanks{\textbf{[Add this 1st paragraph in 1st journal submission only:]} This paper was presented in part at IEEE Infocom '23, New York, May 2023. Additions to the conference version include new theorems, complete proofs that were previously omitted for space reasons, and additional simulation results.
        
                     G. Student and I. Keslassy are with the Department of Electrical and Computer Engineering, Technion, Israel (e-mails: \{grad@tx., isaac@\}.technion.ac.il).
                }}}
        \fi %
    \fi %
  \fi %
\fi %

\OnlyTR{\markboth{Technical Report TR16-01, Technion, Israel}{}}
\Journal{\markboth{Journal of \LaTeX\ Class Files,~Vol.~14, No.~8, August~2015}{}}%

\ifacmart
\else
	\maketitle
        \blfootnote{© 2026 IEEE. Personal use of this material is permitted. See IEEE Intellectual Property Rights for details.}
\fi

\ifacm %
    \sloppypar
\else 
    \ifhotnets
        \sloppypar
    \else
    \fi
\fi

\begin{abstract}
New services such as Google Cross-Cloud Interconnect (CCI) address the rise in fast and large-scale cross-cloud data transfers. CCI offers dedicated high-throughput links with low per-GB transfer costs, but also involves high fixed leasing fees and multi-day provisioning delays. This combination makes cost optimization difficult because traffic patterns are unpredictable. 

This paper presents the first comprehensive study of CCI‑like services. We begin with an empirical characterization of CCI and its alternatives using direct measurements across AWS–GCP interconnects. We then introduce \name, a new dynamic cost‑optimization algorithm designed to handle provisioning delays and uncertainty in future demand. \name adapts by switching between VPN and CCI based on cost trends observed over a sliding time window. We prove that \name achieves asymptotic optimality under sustained high‑demand or low‑demand regimes. Finally, using real‑world traffic traces, we show that \name consistently tracks the best static policy for each scenario and delivers substantial cost savings.

\end{abstract}

\begin{IEEEkeywords}
cross-cloud interconnect, multi-cloud, cost optimization, cloud measurements, cloud transfers
\end{IEEEkeywords}

\section{Introduction}
\label{sec:introduction}

Organizations increasingly deploy massive and complex applications across multiple cloud providers to improve availability, reduce vendor lock-in, and optimize cost and performance. This multi-cloud trend brings new technical challenges, particularly in the area of cross-cloud connectivity. Moving data between clouds is not only a performance concern but also a major contributor to operational costs~\cite{SkyPlane-2023, Matan-Avneri-2025,SkyPilot-2023,lucidity2024multicloud,seemoredata2024snowflake,Contrail}. %

The simplest way to transfer data across clouds is via the Public Internet. Although this approach is flexible and easy to set up, it suffers from variable performance, limited security guarantees, and potential privacy risks.  
To address these privacy concerns, cloud providers offer private networking options based on VPNs~\cite{aws-vpc,GCP-VPC,IBM-VPC,Azure-vnet}. However, this method can become costly for users who transfer large volumes of data.

Another approach is to use a dedicated physical connection that enables users to establish a direct link across clouds. A notable recent advancement in this space is Google’s \textit{Cross-Cloud Interconnect (CCI)}~\cite{CCI,googleCCI2024,googleCCI2025}. 
\begin{red} CCI enables users to manage cross-cloud connectivity directly through the GCP Console, with a native Google Cloud API and typical 99.99\% availability SLA guarantees. 
\end{red}
\new{CCI can be seen as} an evolution of the hybrid cloud, in which the user's premises are connected to the cloud via a private dedicated link bypassing the Public Internet~\cite{azure_expressroute,aws_direct_connect,gcp_dedicated_interconnect2024}.  
CCI is the first cloud native solution for direct collocated cross-cloud connectivity that automates the process of establishing communication links. Competing solutions either require manual efforts or rely on intermediary partners to establish connectivity. Azure also has recently provided another collocated solution with native connectivity to Oracle, but it does not include other clouds yet~\cite{oracle1,oracle2}.
CCI is also an alternative to third-party vendors \new{such as Megaport and Aviatrix} who offer physical and virtual private connectivity solutions for cross-cloud scenarios~\cite{Aviatrix,megaport2025,Fortinet}.

In this paper, we focus on exploring the tradeoffs and challenges involved with the cloud-native CCI scenario\new{, as there is no available academic work or objective evaluation about CCIs. In particular, we seek to explore two core metrics: \\
(1)~\textbf{Performance:} How does CCI compare to VPN and the Public Internet? Does it indeed provide its advertised bandwidth? Do factors like the number of connections or the cloud region impact its performance? \\
(2)~\textbf{Cost:} The cost tradeoff between CCI and VPN is hard to model, because of three main challenges. (a)~\textit{Time lag}: unlike VPNs, which are pay-as-you-go and can be provisioned instantly, CCI requires advance provisioning that typically takes several business days. Therefore the user essentially needs to bet on future demand.}
\begin{red}
(b)~\textit{Complex pricing structure:} Cloud provider pricing is often complex and varies significantly across regions and services, especially in multi-cloud setups, leading users to rely on specialized calculators to estimate costs~\cite{plewnia2023cross, aoshima2020predesign}. CCI further increases this complexity because it is a shared organizational resource: the organization pays a base leasing fee for the interconnect, and each connection incurs an additional charge.
So the user needs to consider all of its applications at once to decide whether to lease a CCI link.
(c)~\textit{Different per-GB pricing mechanisms:} Comparing the CCI pricing model with VPN is non-trivial, as VPN pricing is based on tiered egress pricing, where the per-GB cost decreases with higher usage, whereas CCI has a fixed per-GB cost.
\end{red}

\textbf{Contributions.}
\new{We start by exploring the \textit{performance} of CCI with} the first hands-on measurement study of CCI. %
Our experiments reveal practical behaviors that are not captured by provider documentation. \begin{red}
We find that CCI links strictly enforce their nominal bandwidth at the link level.
Nevertheless, individual users sharing the same CCI link may sometimes observe
throughput exceeding the capacity they provisioned.
\end{red}

\new{Second, we focus on the \textit{cost} tradeoff between CCI and VPN. While performance can be directly measured, cost is more complex, as it depends on the {algorithm} that the user adopts to decide on whether to use CCI or VPN. We prove that there is no optimal online algorithm, and in fact that no algorithm can even guarantee a fixed competitive ratio that is independent of the problem settings. Instead, we argue that in practice we can use past demand to model future demand, and introduce a heuristic } 
online algorithm, \textit{\name}, which dynamically switches between VPN and CCI based on recent demands. \name is designed to deal with the complex provisioning delays and leasing commitments of CCI, along with the complex pricing structure of CCI and VPN that depends on both time and volume.
We prove that \name achieves near-optimal cost in both low and high demand scenarios. 

\begin{red}
Finally, we evaluate \name on real-world workloads, MIRAGE and Puffer, as well as synthetic workloads, and show that it consistently outperforms baseline strategies. For example, with the MIRAGE workload, at the breakeven traffic rates where the total VPN and CCI costs are equal, \name reduces the total cost by an average factor of $1.8\times$. 
We evaluate many configurations to analyze the algorithm sensitivity: transfers from GCP to AWS and in the reverse direction; from GCP to Azure and back; both in single-continent and in multi-continent scenarios. Across all configurations, the results consistently show that \name either outperforms alternative methods or approaches the best performance. \end{red}

\camera{The source code is publicly available~\cite{github}.
}

\section{Related Work}\label{sec:related}

While there is a rich body of literature on cross-cloud networking and measurement, none of the prior works study the performance or cost optimization of CCI. 

For example, recent efforts such as~\cite{Kandregula2022Evaluating,8538558} evaluate multi-cloud networking performance, and CloudCast~\cite{Itzhak2022CloudCast} presents a system for monitoring and comparing cross-cloud latency and throughput across AWS, GCP, and Azure. Others focus more narrowly, \eg on characterizing throughput dynamics within GCP~\cite{GCP_measurments}.
However, to the best of our knowledge, no prior work conducts hands-on measurements or develops cost-aware frameworks for CCI usage. 

Similarly, there is extensive research on multi-cloud task scheduling and resource management~\cite{tang2021reliability,zhu2021task}. \begin{red} Another work, called Paraglider~\cite{paraglider}, provides an abstraction layer for multi-cloud networking, enabling cloud-agnostic specification of networking and simplifying deployment across multiple cloud providers. However, it does not consider privacy or cost optimization. \end{red} In addition, {Skyplane}~\cite{SkyPlane-2023} formulates the cost-throughput trade-off as a Linear Programming (LP) problem. It dynamically plans and routes cross-cloud data transfers in real-time, achieving substantial speedups while controlling egress costs. However, Skyplane relies solely on Public Internet paths and does not incorporate any privacy guarantees. CloudPilot~\cite{cloudpilot} also looks at optimizing the cost of cloud-proxy placement. Finally, many commercial systems also aim to reduce multi-cloud expenses~\cite{spot_netapp, cast_ai, cloudhealth_vmware}. They offer integrated solutions that optimize total cloud billing. 
Still, none of these studies and systems consider the unique characteristics or pricing constraints of CCI links to optimize a CCI \vs VPN tradeoff, rendering them inapplicable in our setting.

\section{Background}\label{sec:background}

\T{Google Cross-Cloud Interconnect (CCI).}  
CCI is a service that offers high-bandwidth, dedicated connectivity between Google Cloud and other cloud providers~\cite{CCI}. This connectivity is established through designated \textit{colocation facilities}, each tied to a specific external provider and geographically restricted, typically within a single continent. CCI is a shared resource, allowing multiple connections by attaching separate VLAN attachments. Deployments are available in 10~Gbps or 100~Gbps capacities and are backed by Google’s 99.99\% availability SLA, with no additional hardware required.

Establishing a CCI connection is not immediate: the process includes email correspondence with Google’s service center and may take a few business days to complete~\cite{gcp-cci-aws-establish}. %
Once established, the CCI link is billed at a fixed hourly rate, regardless of the volume of traffic transmitted. In addition, there is a fixed price per GiB of data egress from GCP.

\T{Virtual Private Cloud (VPC)} is a logically isolated virtual network within a public cloud~\cite{aws-vpc,GCP-VPC,IBM-VPC,Azure-vnet}. Users deploy all workloads, such as virtual machines (VMs) and Kubernetes (K8s) clusters, inside VPCs, which are not accessible to other users unless explicit permission is granted.

\T{VLAN attachments}~\cite{gcp-VLAN} provide the logical link that connects a specific VPC to a CCI within the same geographic area. Each VPC that wants to use an existing CCI must lease its own VLAN attachment. These attachments are available in a range of bandwidth options and incur an hourly charge based on the selected capacity~\cite{CCI_pricing_catalog}. Unlike the CCI provisioning process, VLAN attachments are fully automated and become active immediately upon request. At least one VLAN attachment is required for an active CCI connection, and multiple attachments can be associated with the same CCI.

\T{Establishing a CCI link} involves several steps:  
(1)~Lease a physical port from both Google and another cloud provider at the same colocation facility;  
(2)~Create a virtual connection from VPCs to the physical port (a VLAN attachment in GCP and a Virtual Interface (VIF) in AWS);  
(3)~Establish routing policies and BGP sessions to enable inter-cloud traffic.  
Full technical details of the setup process are available in~\cite{gcp-cci-aws-establish}.

\T{Virtual Private Network (VPN)} is a service supplied by all major cloud providers~\cite{azure_vpn,ibm_cloud_vpn,aws-vpn-overview,gcp-vpn-overview}. 
VPNs offer high flexibility and are easy to deploy, making them a default choice for many multi-cloud scenarios. Cost-wise, VPNs typically involve a fixed hourly charge for maintaining the connection and variable fees for data transferred out of the network.%

\newcommand{\figtestbed}{\begin{figure}[t]
    \centering
    \def\sscale{.27}
    \includegraphics[scale=\sscale,trim=.85cm 1.4cm .25cm .5cm,clip]{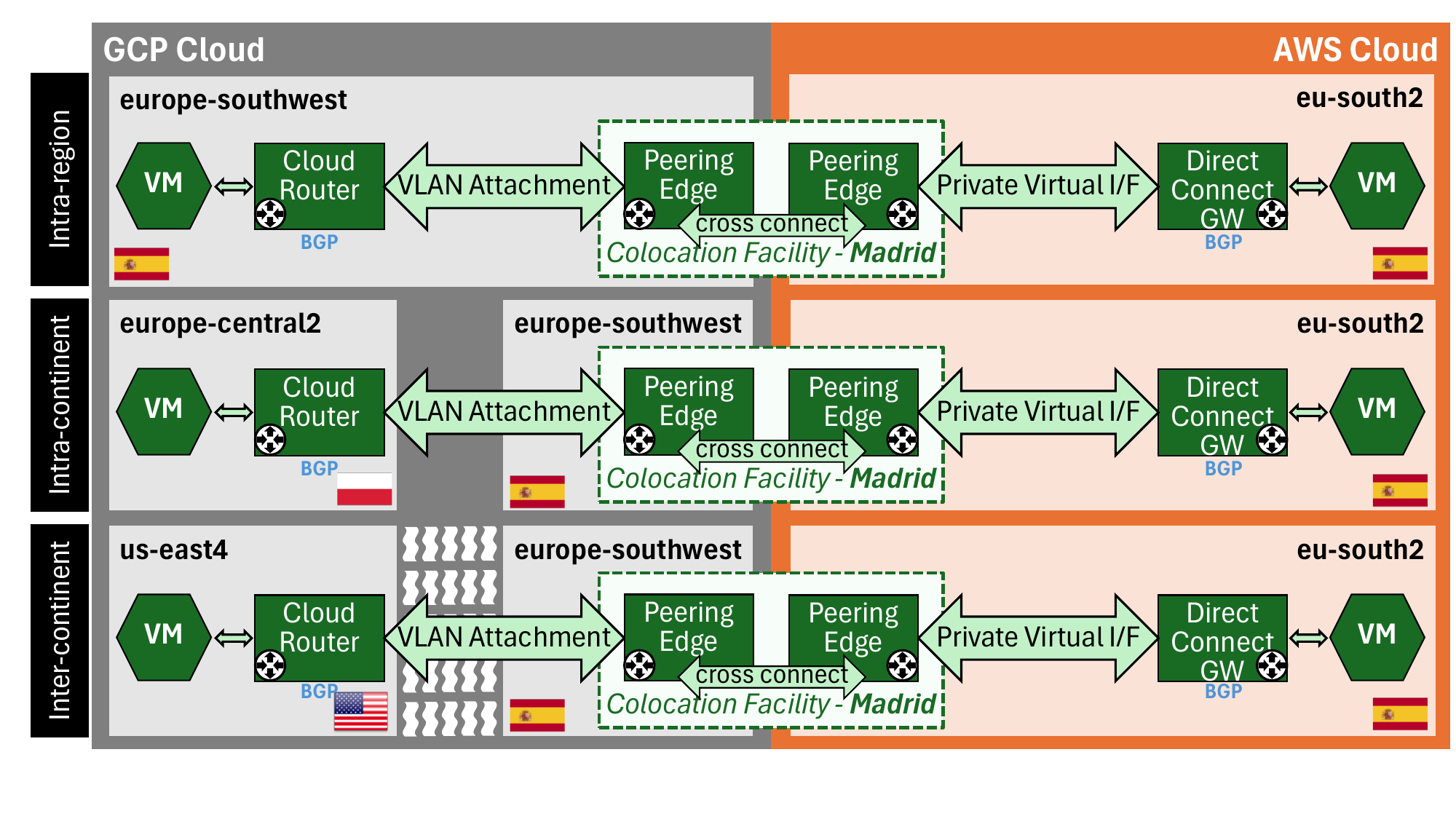}
    \caption{Measurements testbed: measuring network performance across AWS and GCP.    
    }\label{fig:testbed} 
    \ReduceVSpace
\end{figure}}
\newcommand{\figvpn}{\begin{figure*}
\def\sscale{.38}
\centering
\captionsetup[subfigure]{font=footnotesize,skip=4pt}
\begin{subfigure}[t]{.32\linewidth}%
    \includegraphics[page=2,scale=\sscale,trim=2cm 8.8cm 11.5cm 3cm,clip]{figures/vpn_pub.pdf}
    \caption{AWS Spain $\rightarrow$ GCP Spain}
\end{subfigure}
\hfill
\begin{subfigure}[t]{.32\linewidth}%
    \includegraphics[page=1,scale=\sscale,trim=2cm 8.8cm 11.5cm 3cm,clip]{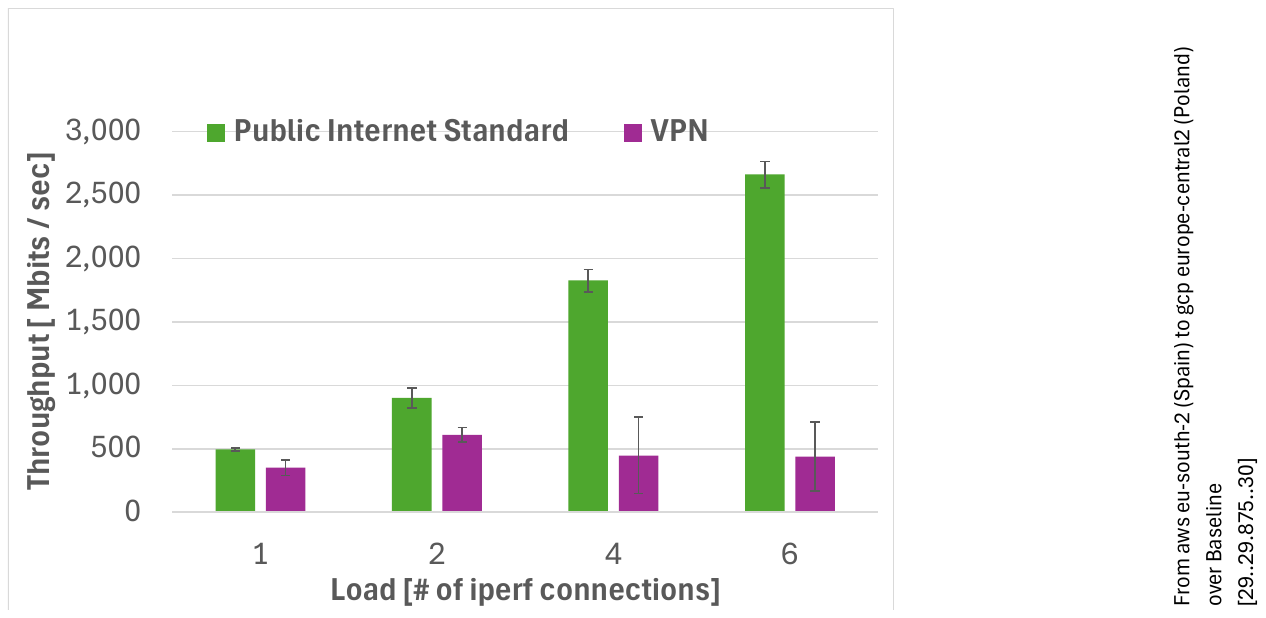}
    \caption{AWS Spain $\rightarrow$ GCP Poland}
\end{subfigure}
\hfill
\begin{subfigure}[t]{.32\linewidth}%
    \includegraphics[page=3,scale=\sscale,trim=2cm 8.8cm 11.5cm 3cm,clip]{figures/vpn_pub.pdf}
    \caption{AWS Spain $\rightarrow$ GCP VA, USA}
\end{subfigure}
\\
\begin{subfigure}[t]{.32\linewidth}%
    \includegraphics[page=5,scale=\sscale,trim=2cm 8.8cm 11.5cm 3cm,clip]{figures/vpn_pub.pdf}
    \caption{GCP Spain $\rightarrow$ AWS Spain}
\end{subfigure}
\hfill
\begin{subfigure}[t]{.32\linewidth}%
    \includegraphics[page=4,scale=\sscale,trim=2cm 8.8cm 11.5cm 3cm,clip]{figures/vpn_pub.pdf}
    \caption{GCP Poland $\rightarrow$ AWS Spain}
\end{subfigure}
\hfill
\begin{subfigure}[t]{.32\linewidth}%
    \includegraphics[page=6,scale=\sscale,trim=2cm 8.8cm 11.5cm 3cm,clip]{figures/vpn_pub.pdf}
    \caption{GCP VA, USA $\rightarrow$ AWS Spain}
\end{subfigure}
\caption{Throughput: VPN Service vs. Unencrypted Public Internet Standard}\label{fig:vpn}
\ReduceVSpace
\end{figure*}}

\newcommand{\figlongvpn}{\begin{figure}
    \centering
    \def\sscale{.38}
    \includegraphics[page=1,scale=\sscale,trim=2cm 9.8cm 3.5cm 2cm,clip]{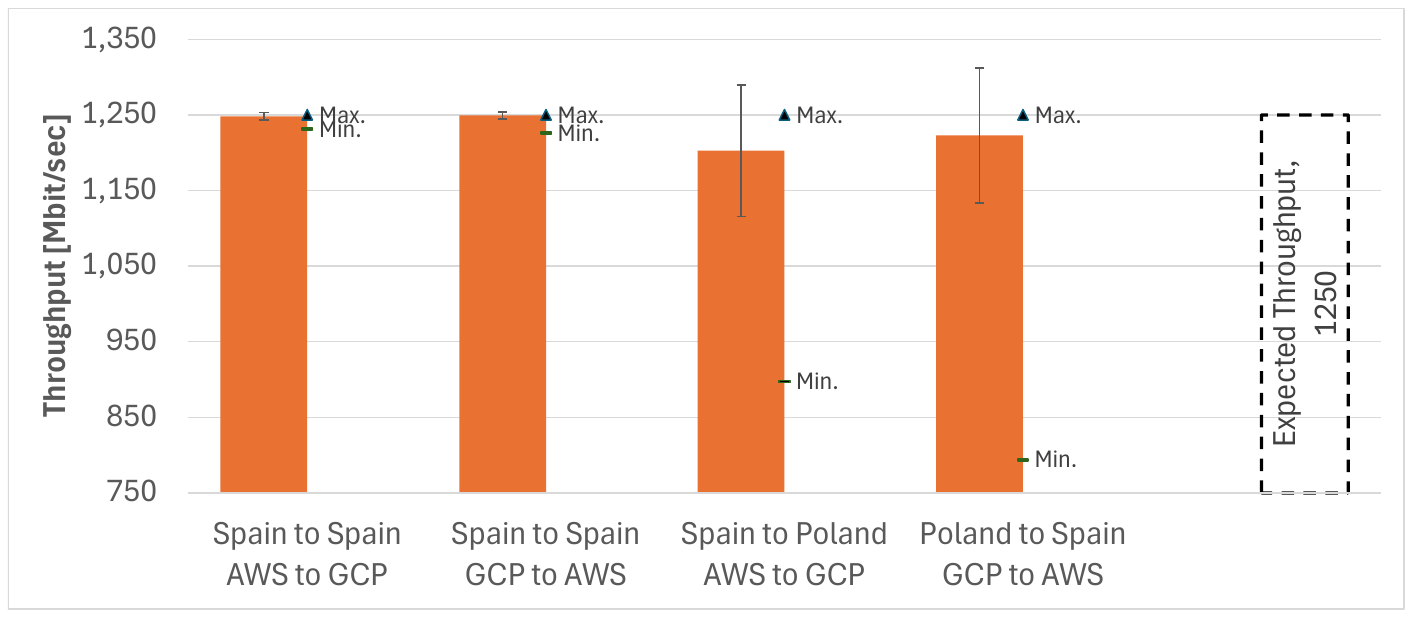}
    \caption{Throughput in long VPN connections
    }
    \label{fig:long-vpn}
    \ReduceVSpace
\end{figure}}

\newcommand{\figcci}{\begin{figure*}
\def\sscale{.38}
\centering
\captionsetup[subfigure]{font=footnotesize,skip=4pt}
\begin{subfigure}[t]{.32\linewidth}%
    \includegraphics[page=2,scale=\sscale,trim=2cm 8.8cm 11.5cm 3cm,clip]{figures/pub_cci.pdf}
    \caption{GCP Spain $\rightarrow$ AWS Spain}
\end{subfigure}
\hfill
\begin{subfigure}[t]{.32\linewidth}%
    \includegraphics[page=1,scale=\sscale,trim=2cm 8.8cm 11.5cm 3cm,clip]{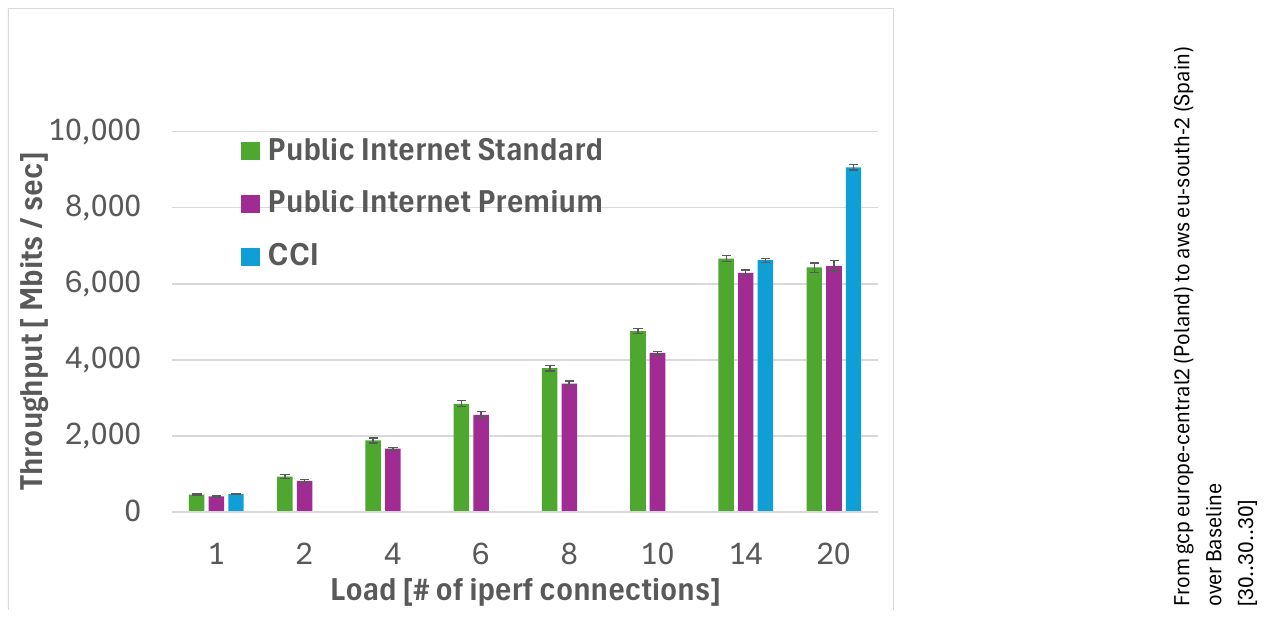}
    \caption{GCP Poland $\rightarrow$ AWS Spain}
\end{subfigure}
\hfill
\begin{subfigure}[t]{.32\linewidth}%
    \includegraphics[page=3,scale=\sscale,trim=2cm 8.8cm 11.5cm 3cm,clip]{figures/pub_cci.pdf}
    \caption{GCP VA, USA $\rightarrow$ AWS Spain}
\end{subfigure}
\caption{Throughput: Public Internet Standard vs. Public Internet Premium vs. CCI
}\label{fig:cci}
\ReduceVSpace
\end{figure*}}

\section{Hands-on evaluation of cross-cloud connectivity}\label{sec:measurements}

\new{In this section, we study whether CCI provides its full capacity in a fair manner, and how it compares to the VPN and Public Internet alternatives, especially when users try to send traffic beyond the nominal capacity.}

\subsection{Preliminary Experiments}\label{subsec:preliminary-experiments}
Prior to performing the main experiments, we need to make sure that VMs do not have bottlenecks on memory or CPU.
We performed experiments with \texttt{iperf3} within the same VPC in the same region in the same cloud for AWS and GCP respectively, testing connectivity between pairs of VM instances of different types. In both GCP and AWS, we were able to obtain the nominal NIC bandwidth guarantees.

\T{Beyond VM NIC nominal capacity.} When considering cross-cloud %
communication, there are caveats that should be taken into account. When ordering a VM instance of a given flavor, it is important to be alert to the fact that this is a virtual, i.e., logical appliance. For example, a VM NIC is, in fact, an elastic cloud resource that shares the underlying physical NIC with other tenants on the same physical host subject to some policy, which is not always fully transparent when cross-cloud communication happens. A NIC of a given nominal capacity can perform from slightly below to considerably higher than its nominal capacity.  For example, \textit{we observed a $2 \times$ higher throughput} on short-lived ($60$ seconds) bursty traffic obtaining  $4.16$ Gbps on a nominally $2$ Gbps NIC. 

Interestingly, this happens across the board for both CCI and Public Internet (premium and regular tiers). It appears that cloud vendors strive to provide the nominal capacity guarantee (even though this is not always achieved on the short-lived connections) and are willing to share spare spot capacity when it is available. Even though this does not entail change in the cost of NIC, the traffic volume is still billed as usual. Thus, this is an advantageous strategy for cloud providers. %
The details of the throttling and capacity sharing policies are not documented though. In general, \textit{our experiments suggest that these distributed mechanisms require some warm-up time} and kick in after some relatively long time, such as $3-5$ minutes. At this point, the observed throughput converges to the nominal capacity. Fully explaining this behavior is out of scope for this paper, but we observed that in a single cloud, convergence to the nominal capacity occurs much faster.

\T{Beyond VLAN nominal capacity.}  
For our evaluation, we used a CCI link of $10$ Gbps and varied the number of VLAN attachments and their capacities between $1-10$ Gbps.
Similarly to NICs, VLAN attachments are not physical VLANs, but logical cloud resources. A VLAN attachment actual performance typically adheres to its nominal capacity, but occasionally for short-lived bursty traffic patterns, it can accommodate much more traffic than allowed by the nominal specification. \textit{We observed up to $70\%$ higher throughput} than the nominal specification. We did not observe VLAN attachment performance below the nominal specification. Note that we explored VLAN attachments and CCI performance using default traffic policies. We were not interested in validating policies for segregating traffic of the same user because they do not affect cost, which is based on the hourly fee per CCI link and  VLAN attachment capacities, and the volume of the egress traffic. 

\T{Beyond CCI nominal capacity.} 
In contrast to the virtual NIC and VLAN attachment resources, we observed that a CCI link never allows to exceed its nominal capacity. This is an expected behavior, since the CCI links are physical resources. 

\T{Overbooking VLANs.} 
In another set of experiments, we want to understand what happens to the traffic of the user under the default policy when  VLAN attachments and/or CCI links are overbooked. It turns out that if a VLAN attachment is overbooked, but the total egress/ingress traffic is below the CCI link's capacity, the TCP connections using this VLAN attachment receive their fair shares. Sometimes the TCP connections can exceed the fair shares, because the total capacity of the VLAN attachment is exceeded thanks to transparent spot capacity sharing by the vendor. Furthermore, when multiple VLANs share the same CCI link, the link capacity is shared fairly between the VLANs, and occasionally VLANs can obtain more than their entitlement. 

We also run heavy overbooking experiments with total VLAN capacity sharing the same CCI link twice exceeding the capacity of the CCI link (i.e., the total VLAN capacity used was $20$ Gbps). In this case, the CCI link allows exactly $10$ Gbps throughput minus $5\%$ L4+L2 overhead, and each of the two $10$ Gbps VLANs receive about $5$ Gbps fair share. 

An additional consideration in cross-cloud connectivity is that \textit{cloud vendors might apply policies to egress and ingress traffic differently}, without making the details completely transparent. All the factors above underline our need to obtain the baseline measurements through experimentation rather than simply assuming that the cloud resources will always perform as their nominal specifications, and use the cheapest VM and VLAN configurations to measure performance of CCI.  
The next section describes the testbed configuration that we settled on following the preliminary experiments.

\figtestbed
\subsection{\new{Testbed and} Methodology}\label{subsec:methodology}
\Cref{fig:testbed} 
depicts our testbed.
In all experiments, VMs run \texttt{Ubuntu 24.04} and we use \texttt{iperf 2.1.9} to measure throughput. We arbitrarily set an \textit{anchor} VPC in AWS region \texttt{eu-south-2} (Madrid, Spain). With respect to this anchor VPC, we use three subnets in the GCP VPC\footnote{Google VPCs are multi-regional global resources comprised of regional private subnets.} in \texttt{europe-southwest1} (Madrid, Spain: intra-region), \texttt{europe-central2} (Poland: intra-continent), and \texttt{us-east4} (Virginia, USA: inter-continent). 
We use \texttt{n2-standard-32} VM type in GCP. It has \texttt{32 VCPUs}, \texttt{128 GB} of memory, and up to \texttt{32 Gbps} default egress bandwidth. In AWS, we use \texttt{m5.12.xlarge} VM type. This machine type has \texttt{48 VCPUs}, \texttt{192 GB} of memory, and \texttt{12 Gbps} of default egress bandwidth.  

For each connectivity option (CCI, VPN, Public Internet Premium, and Public Internet Standard), we measure throughput in two directions, i.e., by using the anchor VPC in AWS as \texttt{iperf3 server} and \texttt{iperf3} client alternately. 

For each experiment configuration (i.e., connectivity option, direction, and collocation option), we also control the utilization level of the cross-cloud link to model (a)~low utilization (below 30\%), (b)~high utilization (around 70\%), and (c)~complete saturation (100\%). The saturation level is being controlled by the number of \texttt{iperf3} connections and bandwidth limit per-connection. 
To illustrate:
\cmd{iperf3 -J -Z -n 20G -i 0 -b 500M \\ -c 10.xxx.xxx.xxx -P 20}
completely saturates the $10$ Gbps CCI connection, albeit for a short period of time. 

We repeat each experiment $30$ times to obtain a data point, compute the average total throughput, and its variance. After a set of $30$ runs, we completely clean the system and start from a clean slate to prevent spurious correlations and side effects across different experiment configurations.  

Thus, overall, we run 4 (connectivity configurations) x 2 (directions) x 3 (collocation options) x 3 (utilization levels) x 30 (replications) $= 2160$ experiments in our main experimentation study. We show a subset of measurements in this paper and will share the full set publicly upon publication.

To make sure that the experiments are not affected by transient conditions in the clouds, we repeat the experiments at different random dates. As one can appreciate, such experiments takes time and budget. They are also difficult to complete in one pass because now and then some exceptions and errors happen. 
Because of the volume of the experiments, the sometimes painstaking discovery process of undocumented behavior of the cloud VPN networking, the configuration changes required as explained in the next subsection, and time and the costs involved, we were not able to run every experiment under exactly the same conditions. To that end, we segregate the data sets obtained under different conditions and discuss them separately.

\figvpn
\figlongvpn
\figcci

\subsection{Cross-cloud  Measurements}\label{subsec:cross-cloud-vpn}

The VPN tunnel bandwidth is limited \emph{nominally} by AWS Site-to-Site VPN quota, which allows $1.25$ Gbps~\cite{AWS-vpn-quotas} per tunnel. The GCP CloudVPN quota allows $3$ Gbps per tunnel. Since we wanted to keep the costs down, we only used one VPN tunnel in the VPN-based experiments and, therefore, we expected that the best throughput attainable in our experiments, will be limited to $1.25$ Gbps. 

\T{Bypassing VPN throttling.}
We discovered that CloudVPN can behave very differently for different traffic profiles. In many cases, we obtained throughput that exceeds the upper limit of the AWS Site-to-Site VPN. With the help of the AWS team, we discovered that \textit{if the traffic flow is short lived, the throttling mechanisms do not kick in fast enough. }

\T{Slow gateway auto-scaling.} Likewise, we observed that in many cases, even for the inbound traffic, we obtained very low throughput as shown in~\cref{fig:vpn}. 
With the help of the AWS team, we discovered that because of some of the elasticity mechanism configurations in AWS, auto-scaling of the gateways kicks in after at least $5$ minutes of sustained high volume traffic going into the AWS cloud from GCP. These features were not documented at the time of our experimentation, and \textit{we consistently obtained sub-optimal results for VPN-based connectivity}, as our experiments lasted less than $5$ minutes. Following the  recommendation of the AWS team, we extended the time of an experiment to be over $5$ mins, as in: 
\cmd{iperf3 -Z -t 600 -i 0 -b 125M \\
-c 10.1.1.6 -P 10}
This allowed to obtain the $1.25$ Gbps upper limit promised by the AWS Site-to-Site VPN quota.

Likewise, when exploring with the AWS customer support team why we do not obtain the upper limit on the AWS outbound traffic, we were able to collaboratively fix this issue on the AWS side. 

\T{Throughput in long VPN connections.}
The results of the long runs are shown in \Cref{fig:long-vpn}. For the intra-region case, the throughput of VPN connectivity is very close to $1.25$ Gbps and the variance is very low. In the inter-region case, the throughput is a bit lower, because of the increased delay, as expected. The variance also becomes larger. It should be stressed that the data points of the measurements were collected at different days and at different hours of the day. In both cases, there are several outliers occasionally crossing the $1.25$ Gbps threshold slightly (as could be expected from our preliminary experiments).

\subsection{CCI Measurements}\label{subsec:cci-measurements}

\cref{fig:cci} shows throughput measurements for CCI and Public Internet (standard and premium tiers). As explained above, we only show CCI performance for $30\%$, $70\%$, and $100\%$ link utilizations. However, we show more data points for the Public Internet options. We make the following observations. 

\T{Guaranteed CCI capacity.} At $100\%$ utilization level, CCI attains its nominal throughput capacity as expected (minus $5\%$ L2+L4 overhead) in both intra-region and intra-continent settings. In the inter-continent configuration, a larger delay causes throughput to drop consistently with the bandwidth-delay product. 

\T{Low CCI utilization.}
At the lower levels of utilization, CCI does not achieve better than standard or premium Internet. %

\T{Internet tier.}
Occasionally, Public Internet standard tier outperforms the premium one. This happens for example in the intra-continent setting when traffic is sent from Poland (GCP) to Madrid (AWS). The premium public Internet tier offering promises to carry traffic inside the GCP network and emit it at the closest Point of Presence (POP) of AWS to hand it to AWS. On the other hand, public standard Internet tier is emitted as soon as possible (i.e., Poland) and enters AWS at the closest POP in Central Europe. The traffic is then carried by the AWS network. Hence, the obtained result might reflect the relative speeds of the cloud vendors and peculiarities of routing.
Of course, this phenomenon does not happen in the intra-region setting, because both vendors have presence in Madrid, Spain, and there are no routing options that would create an asymmetry between the standard and premium traffic. In this case, a user can save costs and receive similar service by using the cheaper standard tier rather than a more expensive premium one.  

In addition, it appears that \textit{the egress Public Internet is capped at $7$ Gbps}, as 
the same NIC is capable of filling the $10$ Gbps CCI link when we use CCI.

\section{Model and Problem Formulation}\label{sec:model}

\new{In the previous section, we analyzed the performance characteristics of CCI. We now want to compare its cost against the VPN alternative. To do so, we }
develop an automated algorithm that dynamically decides when to establish or release either a dedicated physical CCI link or VPN tunnels, in order to minimize the total cost of serving the traffic demands. \new{We start by formulating this as a formal problem.}

\T{Assumptions.} \camera{We now attempt to formulate an optimization problem in the hypothetical case where an oracle knows all demands in advance. We later show that if future demands are unknown, there is no optimal decision, and thus we later offer a heuristic algorithm.} As discussed in Section~\ref{sec:background}, several practical considerations influence the cost model and its estimation:
(1)~The cost of using VPN is not constant; it follows a tiered pricing scheme in which the per-GB cost decreases with the total volume of data transferred.
(2)~Establishing a CCI link involves a provisioning delay of at least  a few business days, denoted by the parameter~\(\Delay\). From our hands-on experience with CCI, we assume $\Delay = 72$ hours. To reflect practical management constraints, we also assume a fixed leasing period of \( \contract =1\)\,week that prevents turning off CCI right away.
In contrast, we assume that a VPN tunnel can be both provisioned and released immediately. 
To highlight the fundamental differences between CCI and VPN, we focus on a simplified scenario in which all regions are located within the same continent and share a single CCI. As a result, the decision reduces to toggling between using VPN and CCI.

\T{Problem framework.}
We consider a set of user pairs $\mathcal{P}$, where each pair $p \in \mathcal{P}$ consists of two VPCs located in different clouds: one in region $r_1$ of provider $A$ and the other in region $r_2$ of provider $B$.  
Our goal is to minimize the total cost over a time horizon of \( T \) hours. %
At each hour, the cost includes two components: an hourly leasing cost and a per-GB data transfer cost. We decompose each cost component into per-pair costs.
\begin{equation}\label{eq:total}
\min \sum_{p \in \mathcal{P}} \sum_{t=1}^T \Big( \text{leasing cost}(p, t) + \text{transfer cost}(p, t) \Big)
\end{equation}

\T{CCI costs. }If $p$ uses CCI at time~\( t \), its leasing cost is \( \frac{\leaseCCI}{P^{t}} + \leaseVlan \), where \( \leaseCCI \) is the shared cost of activating CCI, \( P^{t} \) is the number of pairs sharing CCI at time \( t \), and \( \leaseVlan \) is the per-pair cost for a VLAN attachment. The transfer cost is \( \demand \cdot \costCCI \), where \( \demand \) is the data volume in GB, transferred by $p$ at time \( t\) and \( \costCCI \) is the per-GB transfer cost over CCI.

\T{VPN costs.} If $p$ uses VPN instead, its leasing cost is \( \leaseVPN \), a fixed amount per pair, and the transfer cost is \( \demand \cdot \costVPN \). The VPN transfer cost \( \costVPN \) depends on the accumulated volume from the beginning of the month, so we assume a function \( f(p, \sum_{t'= \text{start of month}}^{t} d_{t'}^p) \) that returns the current per-GB rate for each pair~\( p \) based on its total volume to date.

\T{Problem formulation.}
\Cref{eq:total} becomes
\begin{equation} \label{eq:single_obj}
\begin{aligned}
\min_{\{x_t\}} \sum_{t=1}^T \Bigg[ 
&\ x_t \cdot \left( \leaseCCI
    + \sum_{p \in \mathcal{P}} \left( \leaseVlan 
    + \costCCI \cdot \demand \right) \right) \\
&+ (1 - x_t) \cdot \sum_{p \in \mathcal{P}} \left( \leaseVPN 
    + \costVPN \cdot \demand \right)
\Bigg]
\end{aligned}
\end{equation}

The variable \( x_t \) indicates whether the CCI link is active at time~\( t \). We assume that when CCI is active (\( x_t = 1 \)), all pairs use CCI instead of VPN.

\section{Algorithm}\label{sec:algo}

In this section, we present a \camera{heuristic} algorithm to solve the problem defined in \cref{sec:model} \camera{when future demands are unknown}.

\T{Comparison with ski-rental approach.} The \textit{ski-rental problem} is a foundational online problem that models the trade-off between an ongoing rental cost and a one-time purchase cost, under uncertainty about the future~\cite{karlin1988competitive,karlin1994competitive}. Formally, a skier rents skis at cost $r$ per day or buys them outright for $B$, but the number of skiing days $d$ is unknown. 
The optimal deterministic algorithm rents for $B/r$ days, then buys, achieving a competitive ratio of $2$.
Our setting differs 
in several ways:  
(1)~Leasing CCI (“buying”) is temporary, and the user can switch back to VPN (“renting”). %
(2)~The cost structure includes multiple interacting components (e.g., fixed lease, per-unit transfer, and tiered costs), unlike the simple rent-or-buy setup.  
(3)~There is a built-in provisioning delay \Delay between the decision and activation of the resource.

\begin{figure}[!t] %
\centering

\includegraphics[width=0.9\columnwidth]{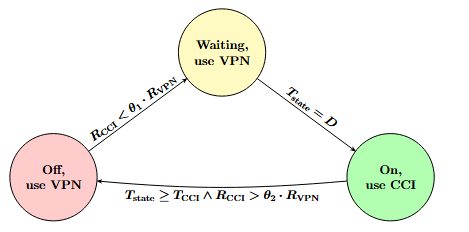}
\caption{ \name's state diagram.}

\label{fig:algoDiagram}
\ReduceVSpace %
\end{figure}

\T{\name.} 
We present an online algorithm, \textit{\name}, that dynamically decides when to activate a dedicated CCI link, based on recent demand and cost trends. It maintains a sliding window of length~\(\History\) to track the aggregated cost of using VPN, denoted \(\aggVPN\), and the cost of using CCI, denoted \(\aggCCI\). Its behavior is governed by a finite-state process, illustrated in~\cref{fig:algoDiagram}. It switches between three states, \textsc{Off}, \textsc{Waiting}, and \textsc{On}, using two thresholds, \(\thresholdEnter\) and \(\thresholdStay\). 
During the initial time steps \( t \in [0, \History) \), \name uses the cumulative cost 
from the past \( t \) steps only. 
For our evaluations, we set \( \contract = \History = 168 \) hours (one week).

In the \textsc{Off} state (which is the initial state), all traffic is routed via VPN. The algorithm remains in this state as long as the observed cost of CCI is not significantly lower than the VPN cost. Once the aggregated CCI cost becomes attractive (specifically, when \(\aggCCI < \thresholdEnter \cdot \aggVPN\)), the algorithm requests CCI provisioning, and transitions into the \textsc{Waiting} state.

While in the \textsc{Waiting} state, traffic continues over VPN during the provisioning delay period. The algorithm remains in this state until the delay \(\Delay\) elapses, after which it transitions to the \textsc{On} state and begins routing traffic over  CCI. 
We denote by \( \tstate \) the time elapsed since \name entered its current state. \begin{red}
In the \textsc{On} state, CCI remains active for at least \(\contract\) time units. After this minimum commitment period, the algorithm continues using CCI as long as it remains cost-effective. Specifically, it stays in the \textsc{On} state while \( \aggCCI \le \thresholdStay \cdot \aggVPN \). Once \( \aggCCI > \thresholdStay \cdot \aggVPN \), CCI is deactivated and the algorithm transitions back to the \textsc{Off} state, reverting to VPN.
\end{red} This approach allows \name to 
toggle between VPN and CCI based on current 
cost dynamics, while incorporating real-world constraints such as provisioning delay. The use of two distinct 
thresholds, \(\thresholdEnter\) for initiating CCI provisioning and \(\thresholdStay\) for renewal, with $\thresholdEnter<\thresholdStay$ ensure stability and prevent oscillation between states, while a single equal threshold can lead to more oscillations, as a well-known hysteresis phenomenon. In the evaluations, we set \(\thresholdEnter = 0.9\) and \(\thresholdStay = 1.1\), reflecting a conservative policy that avoids premature activation of CCI and tolerates mild cost fluctuations.

\T{Theoretical properties.} %
We show that \name is near-optimal at consistently high or low rates.

\begin{property}\label{pro:1}
\noindent \textit{(i)}~If traffic rates remain low, \ie below the activation threshold of \name, then \name achieves optimal cost. %

\noindent \textit{(ii)}~If traffic rates remain high, \ie above the activation condition of \name, 
then \name achieves asymptotic optimal cost. 
\end{property}

\begin{proof}
(i)~If 
traffic rates are sufficiently low, then over any sliding window of length \( \History \), 
$\thresholdEnter \cdot \aggVPN < \aggCCI$, so \name never transitions to the \textsc{Waiting} or \textsc{On} state. 
Since the optimal offline algorithm faces the same cost comparison and finds no benefit in activating CCI, it also remains with VPN. Thus, the behavior and costs are identical.

(ii)~Assume that for every time \(t\),  the aggregated cost over the past \(\History\) steps satisfies
$
\aggCCI < \thresholdEnter \cdot \aggVPN. 
$
This ensures that the algorithm transitions from the \textsc{Off} state to the \textsc{Waiting} state.
After a provisioning delay of \(\Delay\) steps, CCI becomes available. If traffic remains high, the condition
$
\aggCCI < \thresholdStay \cdot \aggVPN
$
is also satisfied in all future windows, ensuring that the algorithm remains in the \textsc{On} state indefinitely, just like the optimal offline solution.
However, due to the provisioning delay \(\Delay\) and the initial time required until the condition for \(\thresholdEnter\) is met, the algorithm continues to use VPN. Thus, the only cost difference between the algorithm and the offline optimum arises during this transition period.
Let

{\small
\[
\gamma = \sum_{t=0}^{\History + \Delay - 1}  \sum_{p \in \mathcal{P}} \left( \demand \cdot (\costVPN - \costCCI) + \leaseVPN - \frac{\leaseCCI}{|\mathcal{P}|} - \leaseVlan \right).
\]
}
\(\gamma\) represents the extra cost incurred by \name due to using VPN while waiting for the CCI provisioning to complete.
After this delay, both the online and offline algorithms use CCI continuously, so their costs become identical. %
Therefore, the competitive ratio satisfies $\frac{\text{Cost}_{\text{\name}}}{\text{Cost}_{\text{\OPT}}} \le 1 + \frac{\gamma}{\text{Cost}_{\text{\OPT}}}.$
As traffic continues over a long time horizon, the additive gap \(\gamma\) becomes negligible relative to the total cost, and the algorithm becomes asymptotically optimal.
\end{proof}

\begin{red}

We now show that unlike the famous competitive ratio of 2 in the ski rental problem \cite{karlin1988competitive}, \textit{there can be no constant competitive ratio} that is independent of the problem parameters in our online problem, no matter the algorithm.

\begin{theorem}\label{thm:no_constant_ratio}
For any constant \( \alpha > 0 \), there exists a traffic pattern and a set of cost parameters such that no online algorithm can guarantee a cost within a factor \( \alpha \) of the optimal offline solution.
\end{theorem}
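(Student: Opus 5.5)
The plan is an adversary argument that exploits the provisioning delay $\Delay$ and the minimum lease $\contract$. The ski-rental bound of $2$ relies on the buyer acquiring the asset immediately. Here, any decision to use CCI only takes effect $\Delay$ steps later, and it commits the algorithm to paying $\leaseCCI$ for at least $\contract$ steps. I will build two families of demand sequences that look identical to the online algorithm up to its decision time. I will then pick the cost parameters so that the ratio on one of the two sequences exceeds any given $\alpha$. To keep things simple, I take a single pair, flat VPN pricing $\costVPN = c_V$ (a degenerate tier function $f$), and $\costCCI = 0$. I also set $\leaseVlan = 0$ and let $\leaseVPN = \epsilon$ be small.

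\textbf{Step 1 (demand pattern).} The adversary sends one burst of demand $d$ per step and keeps doing so until the online algorithm first requests CCI. If the algorithm never requests it, the burst lasts for $\Delay + \tau$ steps with $\tau$ large.

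\textbf{Step 2 (algorithm never requests).} The online cost is then about $(\Delay+\tau)\, d\, c_V$. The offline optimum instead requests CCI at time $0$. It pays VPN cost only during the first $\Delay$ steps and then pays $\leaseCCI\cdot\max(\tau,\contract)$. Choosing $\leaseCCI \ll d c_V$ and $\tau \gg \Delay$ makes the ratio as large as we like. The point to note is that the offline solution also suffers the delay, so I need $\tau/\Delay$ large.

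\textbf{Step 3 (algorithm requests at time $s$).} The adversary stops all demand immediately after time $s$. The online algorithm pays about $s\, d c_V$ plus at least $\leaseCCI\,\contract$ for the committed lease. The offline optimum stays on VPN and pays $s\, d c_V + O(\epsilon)$. The ratio is therefore about $1 + \leaseCCI\contract/(s\,d\,c_V)$. This is large when $s$ is small relative to $\leaseCCI\contract/(d c_V)$. When $s$ is large, the delay argument from Step 2 applies instead, because the offline solution could have switched at time $0$ and saved roughly $(s-\Delay)\,d c_V$ compared with $\leaseCCI\cdot s$.

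\textbf{Step 4 (balancing).} Let $\rho = d c_V/\leaseCCI$ be the ratio of VPN to CCI per-step cost under the burst. The case-(3) ratio is about $\max\{\,1+\contract/(\rho s),\ \rho s/(\leaseCCI$-normalized offline$)\,\}$, and since the offline solution must still wait $\Delay$, the two worst cases cannot both be small. Any $s$ gives a ratio of at least roughly $\min_s \max\{\contract/(\rho s),\ \rho\min(s,\Delay)/(\Delay+ \ldots)\}$. I would pick $\rho$ large and $\contract$ large relative to $\Delay$ so that this minimax exceeds $\alpha$.

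\textbf{Main obstacle.} The difficulty is the careful case analysis in Step 3/4 showing that, for the chosen parameters, every choice of $s$ gives ratio $>\alpha$. In particular, the offline optimum also pays the delay $\Delay$, which caps the gain from early switching. My first attempt would make the long lease $\contract$ the dominant penalty by setting $\contract \ge \alpha\,\rho\,(\Delay+1)$. I would then check the two regimes $s\le \Delay$ and $s>\Delay$ separately. If that balancing fails, I would use the tiered VPN price $f$ as an additional free parameter to stretch the gap.
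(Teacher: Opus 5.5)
\textbf{There is a genuine gap: the single-burst adversary cannot push the ratio past about 2, so Step~4 fails.}

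Your family is one burst starting at time $0$, stopped as soon as the algorithm requests CCI, with the optimum also unable to use CCI before time $\Delay$. This family never forces a ratio much above $2$, whatever $\rho$, $\Delay$, $\contract$ are.

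Take the online rule ``request CCI at the first burst step $s$ at which the cumulative VPN spend $s\,d\,c_V$ reaches $\leaseCCI\contract$.'' This algorithm always requests, so your Step~2 never applies. Against your Step~3 adversary it pays about $s\,d\,c_V+\leaseCCI\contract\approx 2\leaseCCI\contract$. The optimum pays about $\leaseCCI\contract$ by staying on VPN, and switching at time $0$ costs it at least $\leaseCCI\contract$ anyway. So the ratio is at most about $2$, up to rounding and $\epsilon$-terms. In your notation, at $\rho s=\contract$ the first term of your max equals $2$, and the second is at most $2$. Hence the claim that ``the two worst cases cannot both be small'' is false.

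The reason is structural. Because the optimum suffers exactly the same initial delay, the delay is neutralized. The long lease $\contract$ then merely rescales the buy price $B=\leaseCCI\contract$ of a ski-rental instance, and the online rule simply scales its waiting time to match. Your choice $\contract\ge\alpha\rho(\Delay+1)$ only disposes of the regime $s\le\Delay$. Tiered pricing $f$ does not change this rent-or-buy structure either.

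\textbf{The missing idea is that the clairvoyant optimum can provision ahead of demand.} The online algorithm must fix its CCI state $\Delay$ steps in advance with no information. So the adversary should reveal demand only after that state is locked in.

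The paper does exactly this in a one-step game. The algorithm decides at $t=-\Delay$ whether CCI is active at $t=0$. If it picked VPN, a huge demand arrives at $t=0$, and the optimum (on CCI) wins by a factor tending to $\costVPN/\costCCI$, which is chosen larger than $\alpha$. If it picked CCI, no demand arrives, and the optimum pays essentially nothing while the algorithm pays at least $\leaseCCI$.

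The same idea repairs your construction, even under your stricter convention that the optimum is delayed at the start. Run the algorithm on the all-zero sequence.
\begin{itemize}
\item If it ever requests CCI, keep demand at zero. It pays at least $\leaseCCI\contract$, while the optimum pays only the VPN leasing fee, which you may take to be $0$.
\item Otherwise, inject a single huge demand $d$ at time $\Delay$. The algorithm is necessarily on VPN then, since it would have had to request by time $0$ having seen only zeros. The optimum requested at time $0$ and pays only $\leaseCCI\contract$ plus negligible terms. With your $\costCCI=0$, the ratio grows without bound as $d\to\infty$.
\end{itemize}
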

\begin{proof}
Assume by contradiction that there exists an online algorithm \(\mathcal{A}\) and a constant \(\alpha > 0\) such that for any traffic pattern, the cost of \(\mathcal{A}\) is at most \(\alpha\) times the cost of the optimal offline algorithm.
We construct an adversarial one-step scenario. At time \( t = -D \), algorithm \(\mathcal{A}\) must choose between using VPN or activating a CCI link, without knowing future demand. Thus, at time $t=0$, either CCI is activated and we pay for it, or we use VPN.
\begin{itemize}
    \item If \(\mathcal{A}\) picks VPN, the adversary immediately injects at time $t=0$ a large demand volume \(\demand \gg 0\). The optimal choice would have been to use CCI (with a sufficient CCI capacity parameter), incurring %
    cost \(\frac{\leaseCCI}{P^{k}} +\leaseVlan + \costCCI \cdot \demand\), while \(\mathcal{A}\) pays \(\leaseVPN + \costVPN \cdot \demand\). As \(\demand \to \infty\),
    the cost ratio becomes arbitrarily large:
    \[
    \frac{\text{Cost}_{\mathcal{A}}}{\text{Cost}_{\OPT}} = \frac{\leaseVPN + \costVPN \cdot \demand}{\frac{\leaseCCI}{P^{k}} +\leaseVlan + \costCCI \cdot \demand} \to \frac{\costVPN}{\costCCI} > \alpha
    \]
    for suitable \(\demand\), \(\costVPN\), \(\costCCI\).

    \item If \(\mathcal{A}\) chooses to activate CCI, the adversary sends no traffic at $t=0$. The optimal cost is zero, while \(\mathcal{A}\) incurs a cost of at least \(\leaseCCI\), and 
    $
    \frac{\text{Cost}_{\mathcal{A}}}{\text{Cost}_{\OPT}} = \infty > \alpha.
    $
\end{itemize}

In both cases, the adversary forces the cost ratio beyond any constant \(\alpha\), contradicting the assumption.
\end{proof}

\end{red}

\section{Evaluation}\label{sec:eval}

\subsection{Methodology}

\T{Costs.} We design an evaluation environment to mimic the cost dynamics of cross-cloud data transfers between \camera{AWS,  GCP and Azure}. 
We use official pricing data from AWS~\cite{aws_pricing, aws_direct_connect_pricing}, GCP~\cite{gcp__premium_pricing_catalog,CCI_pricing_catalog} \camera{and Azure~\cite{azure_expressroute_pricing,azure_VPN_pricing}} to get all tiered pricing components. 
Our model reflects the structure and pricing logic of real-world deployments, including transfer fees, VPN overheads,  CCI leasing models, \camera{and volume-based cost thresholds}. 
Although a full production setup includes additional components, such as load balancers and redundancy configurations, we focus here on the main components that distinguish between CCI and VPN. %

\T{Traffic.} Each simulation scenario comprises multiple region pairs, where each pair represents a directional traffic flow between a region in GCP and a region in AWS \camera{or Azure}, or vice versa. For this study, all regions are randomly selected within those of a single 
continent, defined as either Europe or the US. 
Traffic can be routed either through VPN or via CCI. Routing decisions are made hourly\camera{, as a higher frequency would not gain much given the 72-hour CCI provisioning delay}. 
Traffic traces are either synthetic or derived from real-world measurements, as detailed below.

\begin{red}
\T{Algorithms.} 
We compare the \name\ against four baselines: 
(1)~\AlwaysCCI, 
(2)~\AlwaysVPN, 
(3)~\AvgAll, which makes decisions based on the average traffic over the entire history, and 
(4)~\AvgMonth, which makes decisions based on the average traffic over the last month.

\T{Workloads.}
There are no publicly-available workloads that specifically characterize cross-cloud transfers. Therefore, we leverage datasets from related domains and adapt them to our setting. We use four workloads to evaluate our algorithm.
Two use real-world datasets \camera{with different traffic characteristics}:   \textsc{MIRAGE-2019} \cite{aceto2019mirage}\camera{, with bursty mobile app traffic,} and \textsc{PUFFER} \cite{PufferSite}\camera{, with more stable video traffic}.
The other two are synthetic: one with a constant traffic rate, and one that models bursty traffic volumes. 
\end{red}

\subsection{MIRAGE real-world dataset}

\T{Description.}
The MIRAGE-2019 dataset~\cite{aceto2019mirage} was collected at the University of Napoli between 2017 and 2019. It includes traffic logs from over 280 participants using rooted Android devices under realistic mobile usage. %

\T{Pre-processing. }
The raw dataset provides second-by-second traffic traces.
In the preprocessing phase, we aggregate the original traffic data into hourly intervals, producing a continuous 2-year trace. We model a network with $K$ users, each representing an individual mobile device. For each user, we generate a traffic trace by sampling from the MIRAGE dataset: Each day, we randomly select one of the available device traces and assign its hourly traffic volume to that user. 
We assume that each application server is deployed in a random distinct region within GCP, while users are randomly spread across same continent regions within AWS \camera{or Azure---and vice versa}.

\begin{figure*}[!t] %
\centering
\begin{subfigure}{0.48\columnwidth}
  \includegraphics[width=\textwidth]{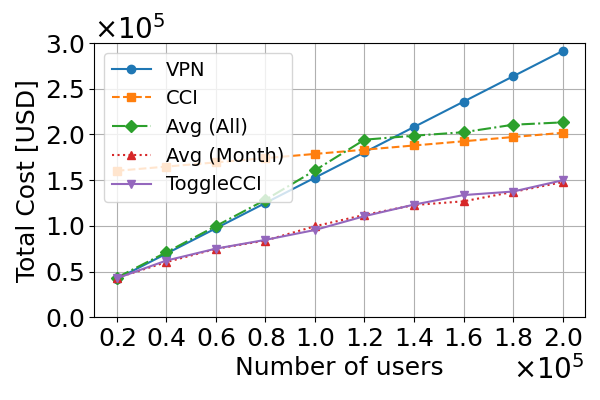}
  \caption{EU: GCP $\to$ AWS}
  \label{fig:mirage:mirage_vs_number_users}
\end{subfigure}
\hfill
\begin{subfigure}{0.48\columnwidth}
  \includegraphics[width=\textwidth]{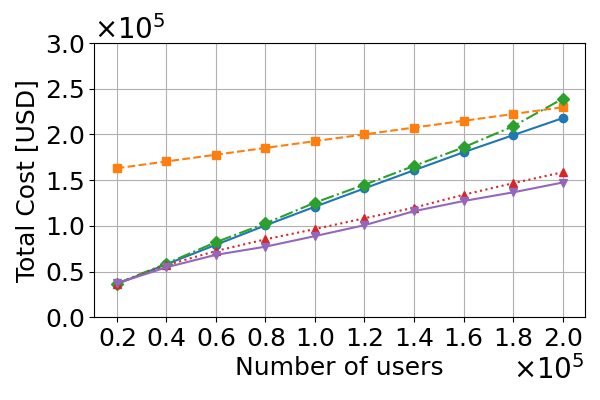}
  \caption{EU: AWS $\to$ GCP}
  \label{fig:mirage:mirage_vs_number_users_aws}
\end{subfigure}
\hfill
\begin{subfigure}{0.48\columnwidth}
  \includegraphics[width=\textwidth]{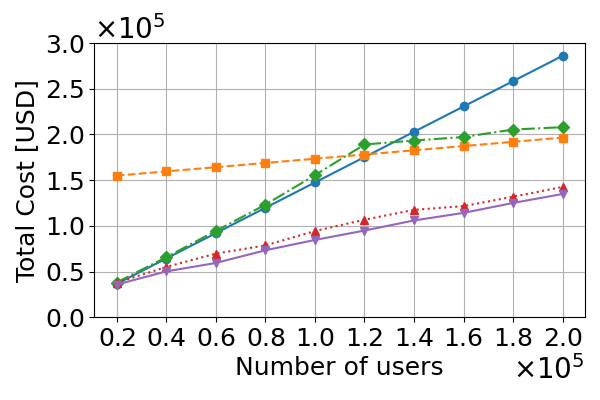}
  \caption{USA: GCP $\to$ AWS}
  \label{fig:mirage:mirage_vs_number_users_usa}
\end{subfigure}
\hfill
\begin{subfigure}{0.48\columnwidth}
  \includegraphics[width=\textwidth]{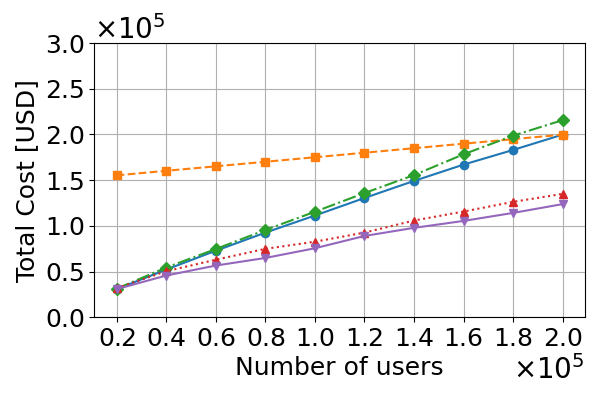}
    \caption{USA: AWS $\to$ GCP}
  \label{fig:mirage:mirage_vs_number_users_aws_usa}
\end{subfigure}
\caption{Performance comparison using the MIRAGE dataset. Cost as a function of the number of users, both in Europe and in the US, from GCP to AWS and vice versa. In all cases, \name is close to optimal.
}

\label{fig:mirage_results}
\ReduceVSpace %
\end{figure*}

\begin{figure*}[!t] %
\centering
\begin{subfigure}{0.48\columnwidth}
  \includegraphics[width=\textwidth]{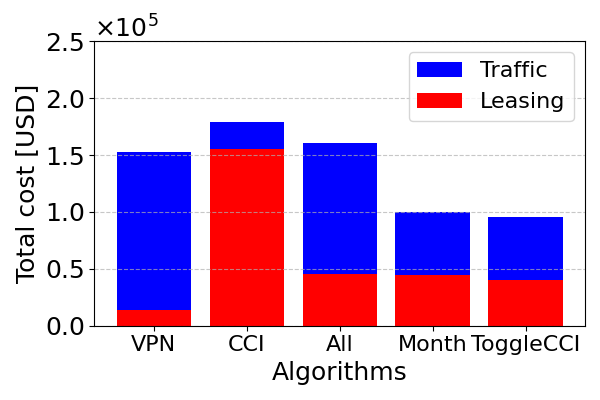}
   \caption{EU: GCP $\to$ AWS}
  \label{fig:mirage:bars_eu_gcp}
\end{subfigure}
\hfill
\begin{subfigure}{0.48\columnwidth}
  \includegraphics[width=\textwidth]{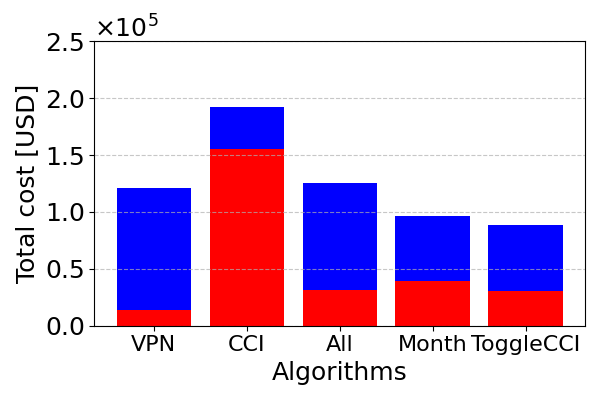}
  \caption{EU: AWS $\to$ GCP}
  \label{fig:mirage:bars_eu_aws}
\end{subfigure}
\begin{subfigure}{0.48\columnwidth}
  \includegraphics[width=\textwidth]{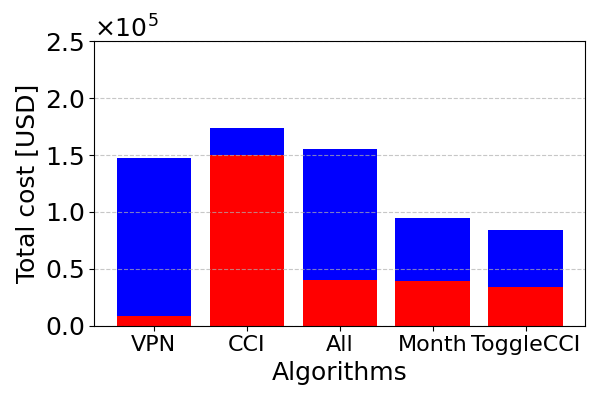}
   \caption{USA: GCP $\to$ AWS}
  \label{fig:mirage:bars_US_gcp}
\end{subfigure}
\hfill
\begin{subfigure}{0.48\columnwidth}
  \includegraphics[width=\textwidth]{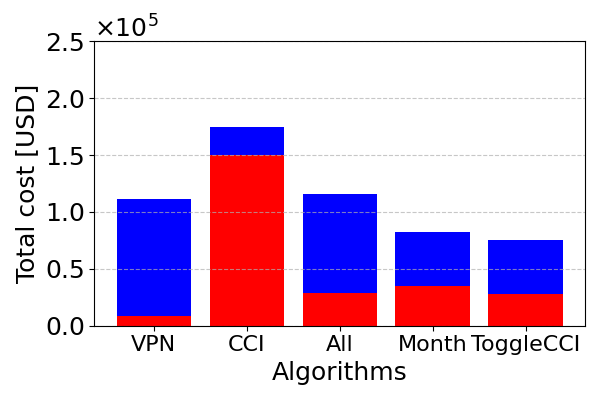}
  \caption{USA: AWS $\to$ GCP}
  \label{fig:mirage:bars_US_aws}
\end{subfigure}
\caption{Cost breakdown for a specific scenario with 100{,}000 users on the MIRAGE dataset, in Europe and in the US, from GCP to AWS and vice versa. Results are roughly similar in all scenarios. \name uses a balanced cost split.
}
\label{fig:mirage_bars}
\ReduceVSpace %
\end{figure*}

\T{Results.}
\Cref{fig:mirage_results} shows the total cost as a function of the number of users. %
Its results are similar to those on the synthetic traffic.
\name consistently tracks the cheapest strategy across varying traffic phases, avoiding unnecessary leasing costs during low-traffic intervals and selectively activating CCI contracts during sustained bursts. ~\Cref{fig:mirage:mirage_vs_number_users,fig:mirage:mirage_vs_number_users_aws,fig:mirage:mirage_vs_number_users_usa,fig:mirage:mirage_vs_number_users_aws_usa} present all four evaluation settings: traffic sent from GCP to AWS and vice versa, in both Europe and the USA. While there are minor variations due to regional pricing and traffic direction, the overall trend is consistent across all scenarios. %
\name reduces the total cost by an average factor of \(1.8\times\) compared to both strategies at breakeven traffic rates.

\Cref{fig:mirage_bars} decomposes the total cost for a representative case (\(K = 100,000\) users) into leasing and traffic components, showing that \name achieves a balanced cost split.

\begin{red}

\T{GCP-Azure.} \Cref{fig:mirage_results_AZURE} presents results for transfers between GCP and Azure in both directions, to evaluate \name's robustness.
The principles of the cost model are similar to those in the GCP-AWS setting, with only minor differences in pricing~\cite{azure_expressroute_pricing}. 
The results are similar to the GCP-AWS case. 
This is because, although the absolute costs differ, the underlying principles remain the same.

\begin{figure}
\centering
\begin{subfigure}{0.48\columnwidth}
  \includegraphics[width=\textwidth]{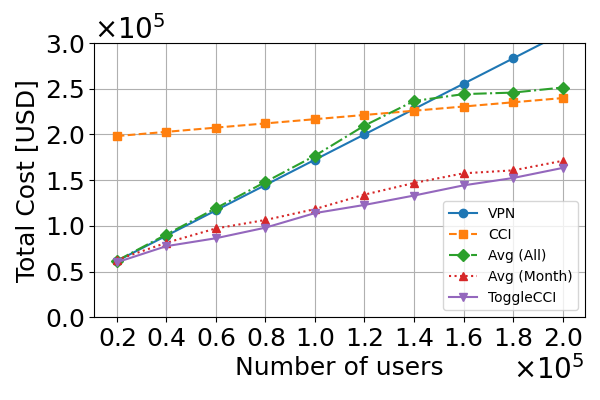}
  \caption{EU: GCP $\to$ AZURE}
  \label{fig:mirage:mirage_vs_number_users_gcp_azure}
\end{subfigure}
\hfill
\begin{subfigure}{0.48\columnwidth}
  \includegraphics[width=\textwidth]{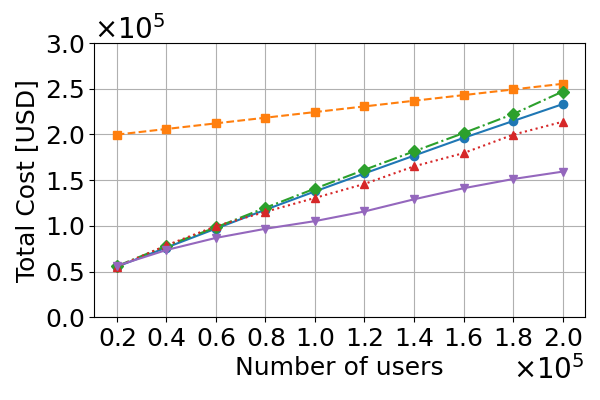}
\caption{EU: AZURE $\to$ GCP}
\label{fig:mirage:mirage_vs_number_users_azure_gcp}
\end{subfigure}
\caption{Performance comparison using the MIRAGE dataset. Cost as a function of the number of users, from GCP to AZURE and vice versa.}%
\label{fig:mirage_results_AZURE}
\ReduceVSpace %
\end{figure}

\T{Inter-Continental Scenario.}
In many practical settings, organizations operate multi-cloud resources across different continents, where inter-continental data transfer costs may be substantial.
We consider a simple broadcast use case: a sender located in Europe (i.e., Paris) sends data from GCP to AWS regions distributed across both Europe and the United States. We assume that the colocation facility is given in this setting, and the user must decide whether to send the traffic through CCI at this colocation or through a VPN. 
We compare two configurations depending on the location of the colocation facility: (1) near the sender (in Paris), and (2) far from the sender (in Ohio). Of course, the egress cost of CCI differs significantly depending on the location of the colocation facility. When the facility is not located near the sender, traffic must first traverse the cloud provider’s inter-continental backbone before reaching the CCI, thereby increasing the overall cost.

\Cref{fig:mirage_inter_continental} shows that in both cases, whether the colocation is nearby or distant, \name adapts to price changes and remains cost-effective in inter-continental settings.

\begin{figure}[!t] %
\centering
\begin{subfigure}{0.48\columnwidth}
  \includegraphics[width=\textwidth]{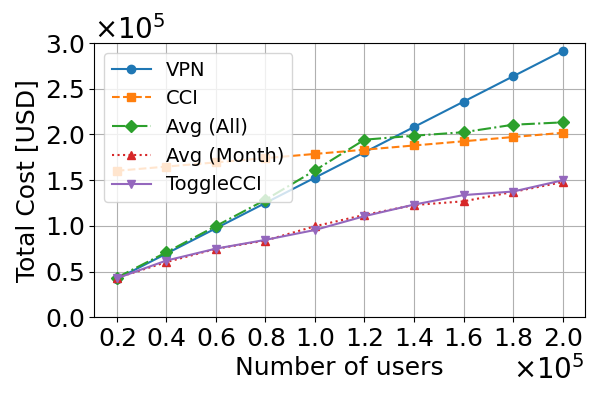}
  \caption{Colocation in Europe}
  \label{fig:inter_continent_colo_in_eu}
\end{subfigure}
\hfill
\begin{subfigure}{0.48\columnwidth}
  \includegraphics[width=\textwidth]{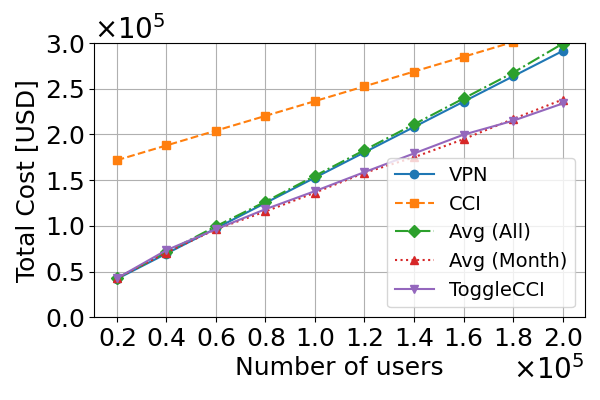}
  \caption{Colocation in the US}
  \label{fig:inter_continent_colo_in_us}
\end{subfigure}

\caption{Inter-continental data transfer cost comparison under different CCI colocation placements (Europe vs. US).}

\label{fig:mirage_inter_continental}
\ReduceVSpace %
\end{figure}
\end{red}

\subsection{PUFFER real-world dataset}

\T{Description.}
The PUFFER dataset \cite{pufferPapper} was collected as part of a real-world video streaming experiment led by Stanford University. Its goal was to study how users interact with adaptive bitrate (ABR) video delivery over the internet. The dataset contains fine-grained, second-level traffic traces from users watching live and on-demand video through a custom ABR system.
The PUFFER dataset provides realistic, continuous traffic patterns generated by users streaming video content. Compared to the bursty mobile app traffic of MIRAGE, it represents more stable, session-based load.

\T{Pre-processing.}
Similarly to our processing of the MIRAGE dataset, we aggregate the data into hourly bins to match the time resolution of our evaluation.  
The dataset includes seven video channels, each assigned to a distinct region in Europe, with transfers directed from GCP to AWS.

\T{Results.}
The PUFFER dataset is characterized by stable, session-based traffic with observable daily and weekly cycles. However, due to the overhead and setup time required to establish a CCI connection, \name cannot effectively exploit these periodic patterns. As a result, dynamic switching between transfer modes provides limited benefit, and the optimal choice between VPN and CCI primarily depends on the overall traffic volume.

\Cref{fig:puffer_a} shows how \name effectively identifies the more cost-efficient option under these conditions. CCI emerges as the most efficient strategy for the PUFFER workload, and \name quickly aligns with it.
In \cref{fig:puffer_b}, we show the decomposition of total cost into leasing and traffic components. As expected, CCI dominates in leasing cost, while VPN dominates in traffic cost. In this scenario, \name closely tracks the cost behavior of CCI.

\begin{figure}[!t] %
\centering
\begin{subfigure}{0.48\columnwidth}
    \includegraphics[width=\textwidth]{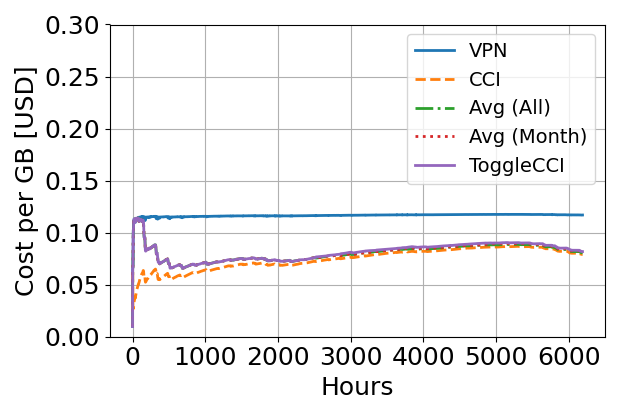}
    \caption{Cumulative average cost per GB}
    \label{fig:puffer_a}
\end{subfigure}
\hfill
\begin{subfigure}{0.48\columnwidth}
    \includegraphics[width=\textwidth]{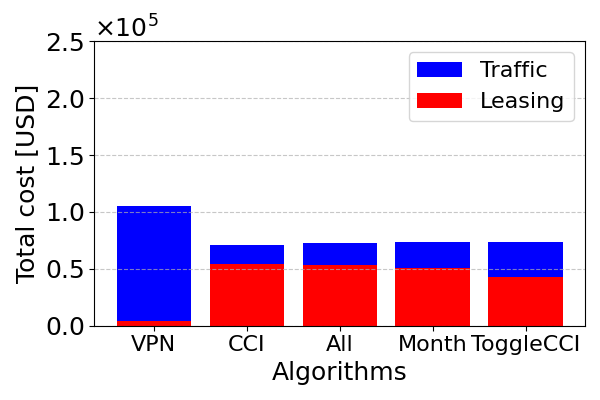}
    \caption{Total cost}
    \label{fig:puffer_b}
\end{subfigure}
\caption{Evaluation with the PUFFER dataset. \name tends to stick with CCI due to the high traffic volume.}
\label{fig:puffer_results}
\ReduceVSpace
\end{figure}

\subsection{Sensitivity Analysis}
\begin{figure*}
\begin{minipage}[t]{0.24\textwidth}    
    \includegraphics[width=\columnwidth]{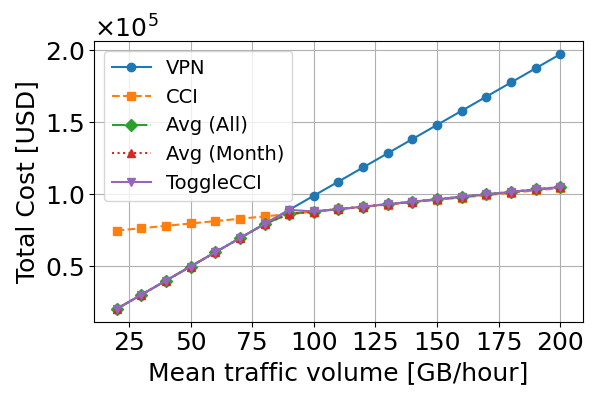}
    \vspace{\parsep}
    \caption{Total cost for the trace with constant traffic rate as a function of rate. \name is near-optimal.
    }
    \label{fig:synthetic_results_a}
\end{minipage}
\hfill
\begin{minipage}[t]{0.74\textwidth}    
\centering
\begin{subfigure}{0.32\linewidth}
    \includegraphics[width=\textwidth]{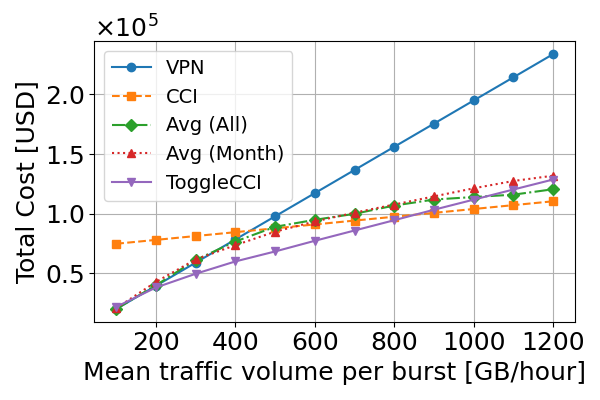}
    \caption{Total cost}
    \label{fig:bursty_total_cost_}
\end{subfigure}
\hfill
\begin{subfigure}{0.32\linewidth}
    \includegraphics[width=\textwidth]{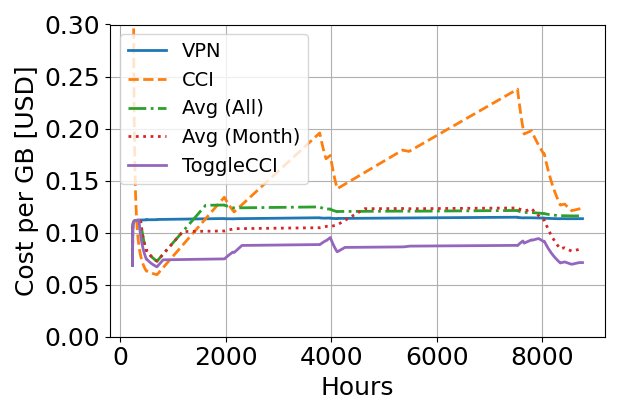}
    \caption{Cumulative average cost per GB %
    }
    \label{fig:bursty_eff_cost}
\end{subfigure}
\hfill
\begin{subfigure}{0.32\linewidth}
    \includegraphics[width=\textwidth]{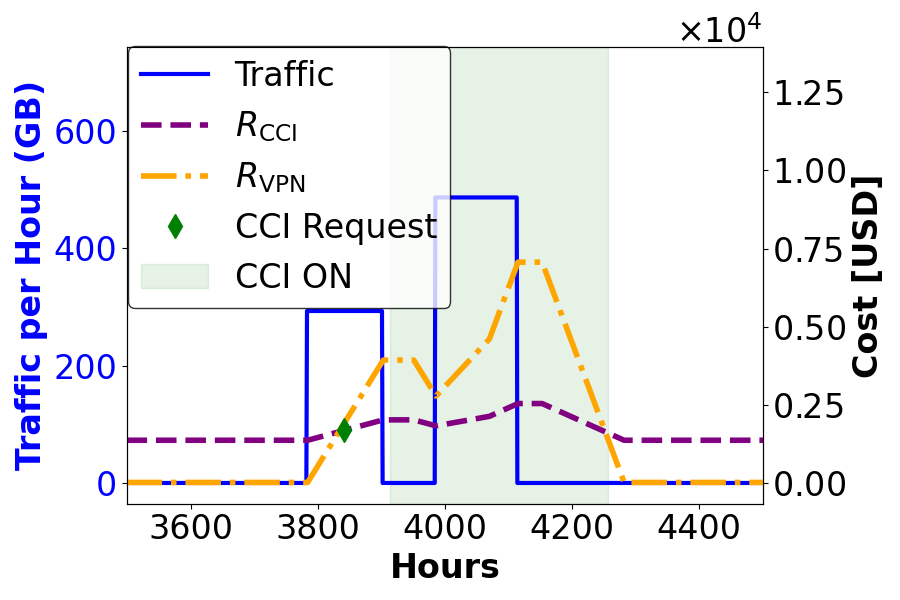}
    \caption{Dynamic behavior}
    \label{fig:bursty_dynamic}
\end{subfigure}
 \vspace{-1mm} %
\caption{Bursty trace. (a)~compares total costs across different mean traffic volumes. Given a mean volume of 400~GB/hour, (b)~shows \name's ability to achieve the best cost per GB compared to the baselines, while (c)~visualizes \name's dynamic behavior over time, with a zoom-in 
between time 3500 and 4500.}
\label{fig:bursty_results}
\end{minipage}
\end{figure*}
\T{Constant traffic.} For additional insights and sensitivity analysis, we also evaluate \name on synthetic workloads. The first is a constant-rate workload, generated over a full year using hourly time steps (\ie 8{,}760 hours). This trace provides initial intuition by fixing the traffic volume at every time step. \camera{It corresponds to scenarios involving short recurring transfer cycles (\eg hourly or daily batches for backups), which appear almost constant to \name.}%

\Cref{fig:synthetic_results_a} presents results for this constant trace. For low traffic rates, it is optimal to stick to VPN, while for higher rates, CCI is preferable.
For low and high rates, \name quickly adopts the optimal choice and
achieves the lowest possible cost, as stated in~\cref{pro:1}. When using CCI, it only misses the first $\Delay$ days due to the CCI setup time. Moreover, because the switching threshold is slightly below the breakeven point (\(\thresholdEnter = 0.9\)), \name remains conservative just below the breakeven point when CCI and VPN costs are equal.

\T{Bursty traffic.} We also evaluate a \textit{bursty trace} for one year. It captures a more dynamic behavior. 
Burst arrivals follow a Poisson process. Burst durations and intensities are sampled from Gaussian distributions with configurable means. %
We start with an initial configuration that uses a Poisson arrival rate of \( \lambda = 1/730 \), corresponding to roughly one burst per month on average; a mean burst duration of about one week; and an average traffic intensity  of 400~GB/hour. %
All results are averaged over 20 randomized repeats.

\Cref{fig:bursty_results}(a)
shows the results for different values of mean burst intensity.
As for the constant trace, for small traffic volumes, VPN is cheaper. For high volumes, CCI becomes the most cost-effective option. In the intermediate range, \name  outperforms both static strategies by dynamically adapting to the observed traffic patterns. It selectively activates CCI during high-demand burst periods and falls back to VPN when demand drops. %
\Cref{fig:bursty_results}(b) shows the average cumulative cost per GB over time given a mean volume of 400 GB/hour. The VPN cost is near-constant, %
while CCI cost drops during bursts and rises in silent periods. \name achieves the lowest cost by adaptively switching between the two.
\Cref{fig:bursty_results}(c) illustrates the behavior of \name over time. We observe that during the idle period (3500-3800), there is no traffic and thus $\aggVPN=0$, while $\aggCCI$ remains stable around 1500\$, which corresponds to the cumulative CCI leasing cost over the sliding window $\History$.
Once the burst begins,  $\aggVPN$ increases sharply due to rising traffic, whereas $\aggCCI$ grows only moderately. \name does not activate CCI until the condition \(\aggCCI<\thresholdEnter\cdot\aggVPN\) is satisfied. The actual activation occurs after a provisioning delay of \(\Delay\) hours. %
After activation, CCI remains \textsc{On} for $\contract$ hours, and is renewed if the stay condition still holds. Otherwise, CCI is deactivated.

\begin{figure}[!t] %
\centering
\begin{subfigure}{0.48\columnwidth}
    \includegraphics[width=\textwidth]{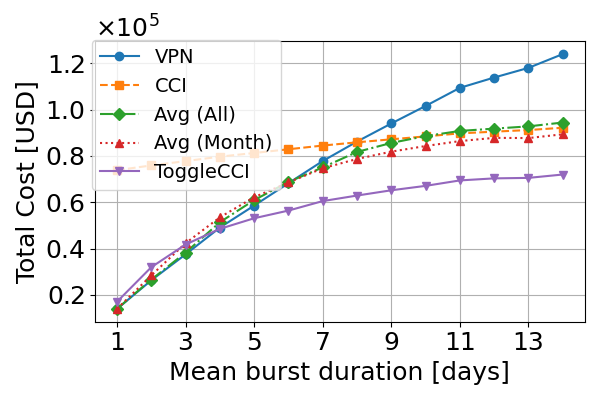}
    \caption{Cost \vs burst duration}
    \label{fig:bursty_duration}
\end{subfigure}
\hfill
\begin{subfigure}{0.48\columnwidth}
    \includegraphics[width=\textwidth]{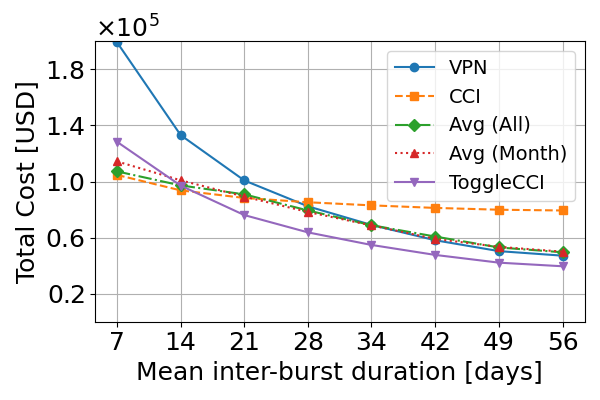}
    \caption{Cost \vs inter-burst time $\frac{1}{\lambda}$}
    \label{fig:burst_lambda}
\end{subfigure}
\caption{Sensitivity analysis for the bursty traffic parameters. (a)~varies the burst duration, given one burst per month on average. (b)~varies the inter-burst interval, given a burst duration of 7 days  on average.%
}
\label{fig:bursty_tuning_params}
\ReduceVSpace %
\end{figure}
\Cref{fig:bursty_tuning_params} presents a sensitivity analysis for bursty traffic. 
In \cref{fig:bursty_duration}, we vary the burst duration while keeping the average frequency fixed at one burst per month. When the duration is low, \name is more expensive than VPN due to the fixed contract duration time $\contract$. %
However, when the duration exceeds $\Delay$ days, \name becomes more cost-effective than VPN. %
In \cref{fig:burst_lambda}, we fix the average burst duration to 7 days and vary the interval between bursts. In this scenario, \name outperforms VPN for any value of $\lambda$ because the duration is sufficiently long. When the interval between bursts becomes short, CCI performs the best due to the limitation of $\Delay$. Once the time between bursts becomes large (around 21 days), \name outperforms both.

\camera{
We next examine the robustness of \name by varying the provisioning delay $\Delay$.
\Cref{fig:sensitivity_delay} presents a high-traffic scenario, where CCI is cheaper than VPN. In this regime, \name outperforms the static policies when the provisioning delay is short, since the algorithm can respond quickly to traffic changes and activate CCI in time. However, when the delay becomes large, \name reacts slowly, and \AlwaysCCI becomes a better option.
\Cref{fig:sensitivity_delay_breakeven} shows the same experiment for the breakeven scenario, where VPN and CCI have similar costs. In this case, \name remains beneficial even under relatively long provisioning delays.}

\begin{figure}[!t] %
\centering
\begin{subfigure}{0.48\columnwidth}
    \includegraphics[width=\textwidth]{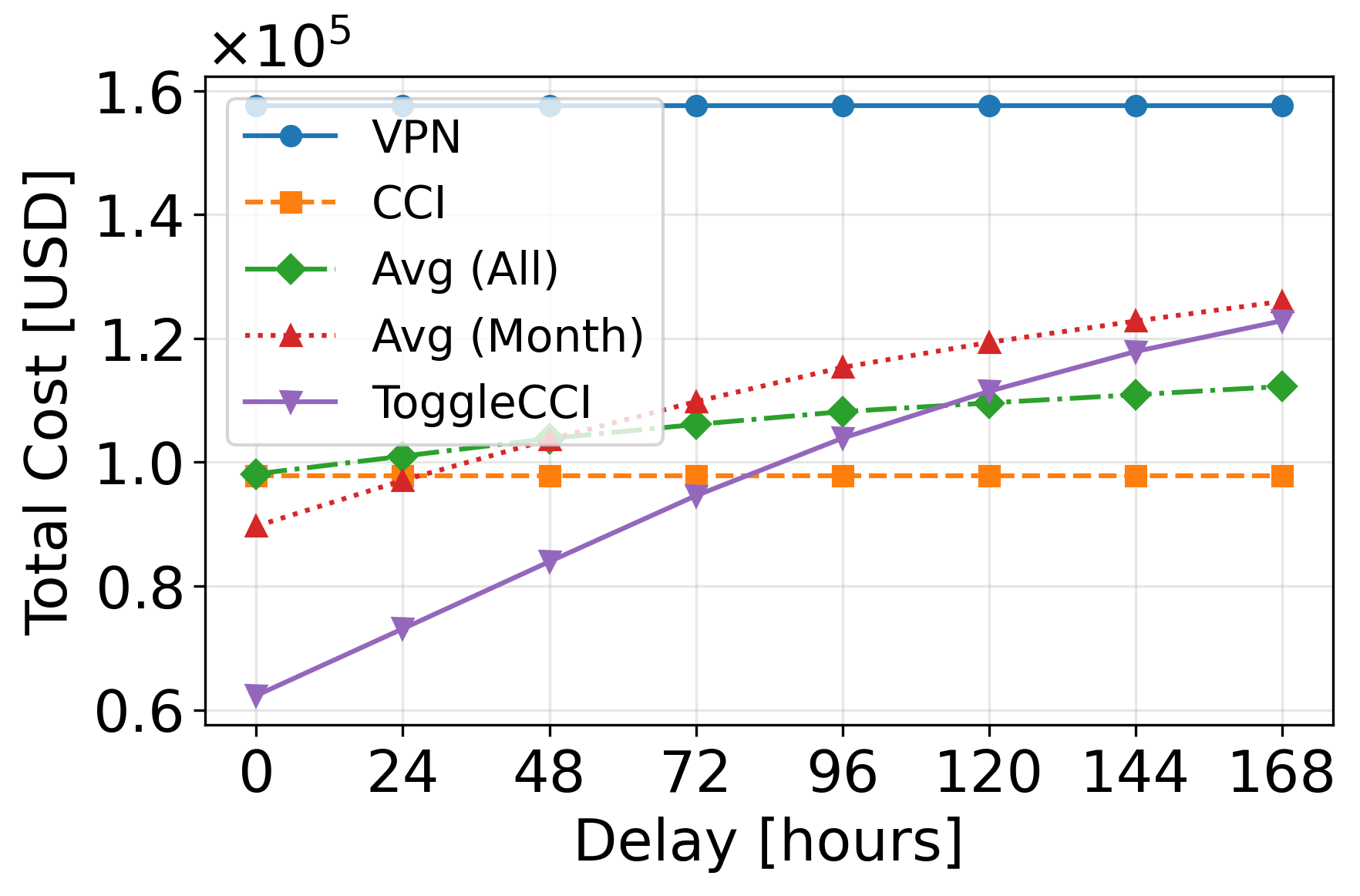}
    \caption{High traffic.}
    \label{fig:sensitivity_delay}
\end{subfigure}
\hfill
\begin{subfigure}{0.48\columnwidth}
    \includegraphics[width=\textwidth]{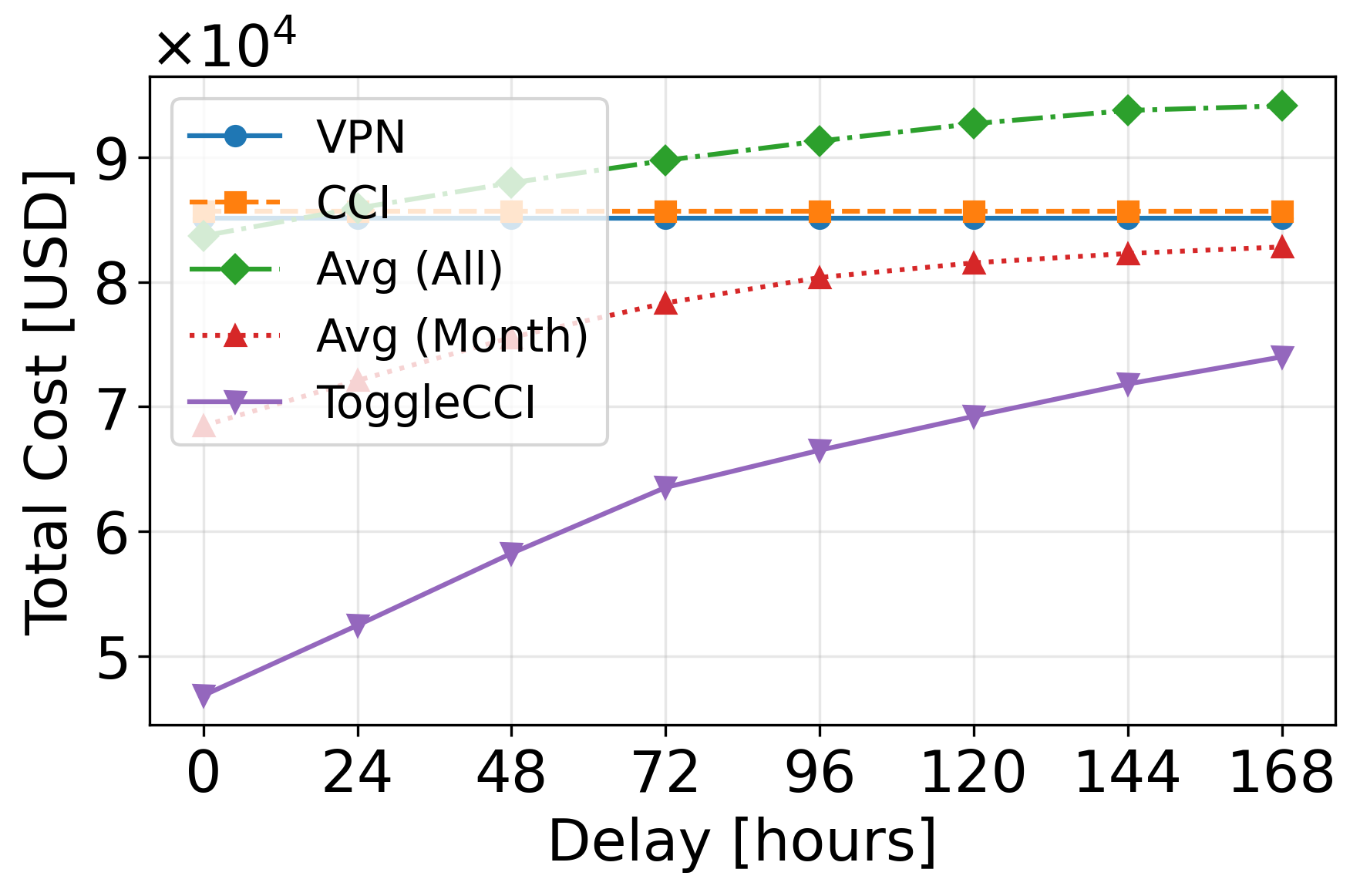}
    \caption{Breakeven traffic.}
    \label{fig:sensitivity_delay_breakeven}
\end{subfigure}
\caption{\camera{Sensitivity of \name to the provisioning delay $\Delay$. The plots show the total cost as a function of the provisioning delay under (a)~high traffic and (b)~mid-range traffic near the VPN/CCI breakeven point.}}
\label{fig:delay_sensitivity}
\ReduceVSpace %
\end{figure}

\section{Conclusion}\label{sec:conclusion}

We presented the first comprehensive study of CCI, a high-throughput option for inter-cloud transfers. While CCI offers low per-GB costs, it uses fixed leasing fees and takes a few days to establish. We conducted a methodical study of CCI performance and compared it to other connectivity options. To the best of our knowledge these results are first of a kind and they provide some valuable insights into the actual (also undocumented) behavior of cross-cloud connectivity.
We \camera{further} proposed \name, an online algorithm that dynamically switches between VPN and CCI based on recent cost trends. We proved that \name achieves near-optimal performance in persistent high or low demand regimes. Evaluations on synthetic and real-world traces show that \name closely tracks the best static choice and significantly reduces total cost in practice. 

\camera{Several directions remain for future work. 
In particular, while we evaluate a single VPN tunnel as a representative baseline, real deployments may use more advanced networking solutions, with more complex cost considerations.}

\T{Acknowledgments.} This work was partly supported by the Louis and Miriam Benjamin Chair in Computer-Communication Networks.

\ifacm %
  
    \ifacmart 

      \bibliographystyle{ACM-Reference-Format}
      \bibliography{mybib}
    \else
      \bibliographystyle{abbrv}
      \bibliography{mybib}    
    \fi

\else
	\ifusenix %
    	{\footnotesize 
        \bibliographystyle{acm}
        \bibliography{mybib}
        }
    \else
        \ifhotnets %
            \bibliographystyle{abbrv} 
            \begin{small}
                \bibliography{mybib}
            \end{small}
          \else %
            \bibliographystyle{IEEEtran}
            \bibliography{mybib}%
        \fi %
    \fi %
\fi %

\end{document}